\newcommand{\Span}{\operatorname{span}}
\newcommand{\Rank}{\operatorname{rank}}
\newtheorem{theorem}{Theorem}[section]
\newtheorem{lemma}[theorem]{Lemma}
\newtheorem{corollary}[theorem]{Corollary}
\newtheorem{definition}[theorem]{Definition}
\newtheorem{example}[theorem]{Example}
\newtheorem{remark}[theorem]{Remark}
\newcommand{\iddots}{\rotatebox{80}{$\ddots$}}
\title{A synchronous NPA hierarchy with applications}
\author{Travis B. Russell}
\date{}
\begin{document}

\maketitle

\begin{abstract}
    We present an adaptation of the NPA hierarchy to the setting of synchronous correlation matrices. Our adaptation improves upon the original NPA hierarchy by using smaller certificates and fewer constraints, although it can only be applied to certify synchronous correlations. We recover characterizations for the sets of synchronous quantum commuting and synchronous quantum correlations. For applications, we show that the existence of symmetric informationally complete positive operator-valued measures and maximal sets of mutually unbiased bases can be verified or invalidated with only two certificates of our adapted NPA hierarchy.
\end{abstract}

\section{Introduction}

Technological advances in quantum computing and quantum communication have accelerated in recent years, putting a number of high-stakes applications in the realm of the potential near future. One such application is quantum key distribution, a protocol in which a secret key is distributed to two distant parties through the measurement of entangled particles. The security of device-independent quantum key distribution is based on the laws of quantum mechanics when entanglement is present in the particles measured \cite{Umesh_Vidick_SecurityQKD}. Moreover, this entanglement can be verified by considering the probability distributions generated by the measurement devices used in the key generation process. These probability distributions are called \textbf{quantum correlations}. However, many open questions remain regarding precisely which probability distributions can be certified as quantum correlations (e.g. see \cite{Fu_Miller_Slofstra_2021_preprint}). 

The best known method for distinguishing quantum correlations from other kinds of probability distributions is the NPA hierarchy, developed in \cite{NPA2008}. Roughly, the NPA hierarchy is an infinite sequence of semidefinite programs which yield positive semidefinite matrices certifying that a given probability distribution may be a quantum correlation. If the given probability distribution $p$ yields a complete infinite sequence of certificates, then that distribution is certified as a \textbf{quantum commuting} correlation, meaning that $p$ was potentially generated by a valid quantum measurement scenario according to the Haag-Kastler axioms of relativistic quantum mechanics \cite{HaagKastler}, though the Hilbert space required may have infinite dimension. In practice, one cannot generate an infinite sequence of certificates directly. However, if there exists a certificate $\Gamma^{m+1}$ extending the previous certificate $\Gamma^m$ and having the same rank, then the hierarchy can be terminated and the correlation can be certified as a quantum correlation arising from a finite dimensional Hilbert space. The NPA Hierarchy can also be developed using the theory of universal C*-algebras (see Section 3 of \cite{PaulsenEtAlSynchronous}), and it was recently generalized to the setting of prepare-and-measure scenarios (see \cite{npjPrepareMeasureCorrelations}).

The distinction between quantum commuting correlations and quantum correlations would be of less practical importance if it were possible to approximate an arbitrary quantum commuting correlation with a quantum correlation. The question of whether or not this was possible remained open for many years and generated tremendous research interest, eventually becoming tied to a long-standing problem in mathematics known as Connes' embedding problem. These questions were finally settled recently in the paper \cite{mipStarEqualRe}, which showed that some quantum commuting correlations cannot be approximated by quantum correlations. Their methods required only synchronous quantum correlations, which are the subject of this paper.

In this paper, we present an adaptation of the NPA hierarchy for certifying synchronous quantum and quantum commuting correlations. While a synchronous correlation can be verified using the original NPA hierarchy as well, our adaptation has some advantages. The certificates produced by the hierarchy are smaller than those produced in the original NPA hierarchy. Moreover, there are fewer linear constraints imposed on the certificate, as one only needs to check that the certificates satisfy a kind of cyclic symmetry. See Remark \ref{rmk: advantage} below for more details. Our adapted hierarchy yields new characterizations for the sets of synchronous quantum and quantum commuting correlations. To further motivate these tools, we demonstrate how one can verify or invalidate two major open problems in quantum information theory, namely the existence of symmetric informationally-complete positive operator-valued measures (SIC-POVMs) and maximal sets of mutually unbiased bases (MUBs) in each dimension, using only two certificates of our adapted NPA hierarchy.

We conclude this introduction by mentioning some related work. The certificates in our adapted NPA hierarchy are examples of \textit{tracial Hankel matrices}, positive-semidefinite matrices indexed over words in a finite alphabet satisfying certain cyclical constraints. These have been used previously in the literature to certify the existence of tracial states on C*-algebras (see, for example, Section 4.5 of \cite{KlepPohvTracialMoment} or the paper \cite{KlepBurgdorfTracial}). The existence of flat extensions is also considered in \cite{KlepPohvTracialMoment}. Consequently, we do not expect the results of Sections 3 and 4 to be surprising to experts. However, we are not aware of any reformulation of the NPA hierarchy for certifying the existence of synchronous correlations in the literature and we feel the advantages outlined in Remark \ref{rmk: advantage}, and especially the applications in Section 5, provide sufficient motivation for sharing the details of this reformulation. We thank J. W. Helton and the anonymous referee for pointing out these references to the author. 

Our results on SICs and MUBs rely on Theorem \ref{thm: SIC algebra} and Theorem \ref{thm: MUB Algebra} below. These are algebraic characterizations of the C*-algebras generated by projections associated to a SIC-POVM or a maximal family of mutually unbiased bases. The definitions are inspired by the similarly defined MUB algebra of \cite[Theorem 21.3]{Navascues2012}. The MUB algebra is used in \cite{GriblingPolakMUBs} to give another characterization for MUBs in terms of semidefinite programs. There, the authors use different techniques, relying on wreath product symmetries of certain groups to produce moment matrices. For MUBs in the six-dimensional Hilbert space, these moment matrices would correspond to at least order five certificates in our adapted NPA Hierarchy (see Table 2 and Table 3 of \cite{GriblingPolakMUBs}) as opposed to order two certificates using our approach (in any dimension). We thank the anonymous referee for pointing out these references to the author.

\section{Preliminaries}

We begin with an overview of the notation and mathematical prerequisites for the paper. We let $\mathbb{N}, \mathbb{R}$, and $\mathbb{C}$ denote the sets of positive integers, real numbers, and complex numbers, respectively. Given $\lambda \in \mathbb{C}$, we let $\overline{\lambda}$ denote its complex conjugate. For each $n \in \mathbb{N}$, we let $M_n$ denote the set of $n \times n$ matrices with entries in $\mathbb{C}$. 

For each $N \in \mathbb{N}$, let $[N] = \{1,2,\dots,N\}$. Given a set $A$, we let $A^*$ denote the set of all words in $A$, including the empty word which we denote $0$. For each $w \in A^*$, we let $|w|$ denote the length of $w$, with the convention that $|0|=0$. For every $k \in \mathbb{N}$, let $A^k$ denote the set of all words of length at most $k$. For example, $[N]^k$ denotes all words of length at most $k$ in the symbols $\{1,2,\dots,N\}$.

We assume basic familiarity with the theory of Hilbert spaces over $\mathbb{C}$ and bounded linear operators on Hilbert spaces. Given a Hilbert space $H$, we sometimes use the notation $\langle h, k \rangle$ to denote the inner product of vectors $h,k \in H$, and we assume the inner product is linear in the second component and conjugate-linear in the first. We also employ bra-ket notation whenever convenient, for example letting $\ket{\phi}, \ket{\psi}$ denote vectors in a Hilbert space and $\braket{\psi}{\phi}$ denote their inner product. We use the notation $\vec{v}$ whenever regarding vectors as column matrices in the finite-dimensional Hilbert space $\mathbb{C}^n$. We let $B(H)$ denote the set of operator norm bounded operators on a Hilbert space $H$, and we let $T^{\dagger}$ denote the adjoint of an operator $T \in B(H)$. By a \textbf{C*-algebra}, we mean a norm-closed $\dagger$-closed subalgebra of $B(H)$. A \textbf{state} on a unital C*-algebra $\mathfrak{A}$ is a linear functional $\phi: \mathfrak{A} \to \mathbb{C}$ mapping the identity to 1 and mapping positive elements of $\mathfrak{A}$ to positive real numbers. A state $\phi: \mathfrak{A} \to \mathbb{C}$ is \textbf{tracial} if $\phi(ab)=\phi(ba)$ for all $a,b \in \mathfrak{A}$. A state  $\phi$ is \textbf{faithful} if $\phi(x^{\dagger}x) > 0$ whenever $x \neq 0$. An element $P \in \mathfrak{A}$ is called a \textbf{projection} if $P = P^{\dagger} = P^2$. A set of projections $\{P_1, P_2, \dots, P_N\} \subseteq \mathfrak{A}$ is called a \textbf{projection-valued measure} if each $P_i$ is a projection and if $\sum_{i=1}^N P_i = I$. We use freely well-known results about C*-algebras and Hilbert space operators throughout the paper, and we refer the reader to \cite{ConwayOperatorTheoryBook} for an in-depth introduction to these topics.

\subsection{Quantum correlations}

Let $n,k \in \mathbb{N}$. A tuple of real numbers $\{p(a,b|x,y)\}_{a,b \in [k], x,y \in [n]}$ is a \textbf{correlation} if it satisfies the relation \[ \sum_{a,b \in [k]} p(a,b|x,y) = 1 \] for all $x,y \in [n]$. A correlation $p(a,b|x,y)$ is called \textbf{nonsignalling} if the quantities \[ p_A(a|x) = \sum_{b \in [k]} p(a,b|x,y) \quad \text{ and } \quad p_B(b|y) = \sum_{a \in [k]} p(a,b|x,y) \] are well-defined, meaning that the sum expressing $p_A(a|x)$ is independent of the choice of $y \in [n]$ and the sum expressing $p_B(b|y)$ is independent of the choice of $x \in [n]$. Nonsignalling correlations model a scenario where two parties, traditionally named Alice and Bob, are provided questions $x$ and $y$, respectively, from a referee. Without communicating with each other, Alice produces an answer $a$ and Bob produces an answer $b$ with probability $p(a,b|x,y)$. The lack of communication between Alice and Bob can be verified after many trials by checking that the quantities $p_A(a|x)$ and $p_B(b|y)$ are well-defined; i.e. by checking that $p(a,b|x,y)$ is nonsignalling.

Nonsignalling correlations arise in quantum communication protocols, such as quantum key distribution \cite{Umesh_Vidick_SecurityQKD}. In these scenarios, Alice and Bob produce their answers by performing measurements on particles emitted from a common source. These particles may be entangled, yielding observable differences from correlations which arise in classical scenarios \cite{CHSH}. Mathematically, a correlation $\{p(a,b|x,y)\}$ is called a \textbf{quantum correlation} if there exists a finite dimensional Hilbert space $H$, projection-valued measures $\{E_{x,a}\}_{a=1}^k, \{F_{y,b}\}_{b=1}^k \subset B(H)$, and a unit vector $\ket{\phi} \in H \otimes H$ such that 
\begin{equation}\label{eqn: Q Correlation defn} p(a,b|x,y) = \bra{\phi} E_{x,a} \otimes F_{y,b} \ket{\phi}. \end{equation} 
In this formulation, Alice and Bob apply measurements corresponding to the projection-valued measures $\{E_{x,a}\}_{a=1}^k$ and $\{F_{y,b}\}_{b=1}^k$ to their respective copies of the Hilbert space $H$ upon receiving questions $x$ and $y$, respectively, from the referee. The laws of quantum mechanics dictate that they obtain answers $a$ and $b$, respectively, with probability $p(a,b|x,y)$ as described in Equation (\ref{eqn: Q Correlation defn}).

Quantum correlations can be equivalently defined in terms of finite-dimensional C*-algebras. A correlation $\{p(a,b|x,y)\}$ is a quantum correlation if and only if there exists a finite dimensional C*-algebra $\mathfrak{A}$, projection-valued measures $\{E_{x,a}\}_{a=1}^k, \{F_{y,b}\}_{b=1}^k \subseteq \mathfrak{A}$ for which each $E_{x,a}$ commutes with each $F_{y,b}$, and a state $\phi: \mathfrak{A} \to \mathbb{C}$ such that $p(a,b|x,y) = \phi(E_{x,a} F_{y,b})$. If we eliminate the restriction that the C*-algebra be finite dimensional, we obtain a \textbf{quantum commuting correlation}. It was an open question for many years whether or not an arbitrary quantum commuting correlation can be approximated by quantum correlations \cite{Tsirelson1987}. Indeed, this question was shown to be equivalent to the Connes' embedding problem \cite{ConnesConjecture} of operator algebras (See \cite{JMPPSW2011}, \cite{FritzKirchberg}, and \cite{OzawaConnes}). The recent results of \cite{mipStarEqualRe} imply that some quantum commuting correlations cannot be approximated by quantum correlations, thus solving the Connes' embedding problem. 

\subsection{The NPA Hierarchy} \label{subsec: NPA}

\textbf{The NPA hierarchy} is an infinite sequence of semidefinite programs developed by Navascues-Pironio-Acin in \cite{NPA2008}. Each semidefinite program in the NPA hierarchy takes as input a correlation $\{p(ab|xy)\}$ and returns a certificate in the form of a positive-semidefinite matrix, provided that the semidefinite program with input $\{p(ab|xy)\}$ is feasible. It is shown in \cite{NPA2008} that a correlation $\{p(ab|xy)\}$ is quantum commuting if and only if every semidefinite program in the NPA hierarchy returns a postive-semidefinite certificate. When $\{p(ab|xy)\}$ is a quantum correlation, only finite many levels of the NPA hierarchy are needed to certify that $\{p(ab|xy)\}$ is a quantum correlation. However, there is no known efficient method for distinguishing quantum correlations for quantum commuting correlations using the NPA hierarchy. We summarize here the basic elements of the NPA hierarchy and refer the reader to \cite{NPA2008} for more details.

Let $A$ and $B$ be finite sets, and let $C = A \cup B$ denote the disjoint union of $A$ and $B$. The set $A$ will represent projections belonging to Alice and the set $B$ will represent projections belonging to Bob. Given a word $w \in C^*$, let $w_A = a_1^{\alpha_1} a_2^{\alpha_2} \dots a_n^{\alpha_n} \in A^*$ and $w_B = b_1^{\beta_1} b_2^{\beta_2} \dots b_m^{\beta_m} \in B^*$, where $w_A$ is obtained from $w$ by concatenating the letters of $w$ which belong to $A$ in order from left to right with $a_k \neq a_{k+1}$ for each $k$, and similarly obtaining $w_B$ from the remaining letters of $w$. If $w_A = a_1^{\alpha_1} a_2^{\alpha_2} \dots a_n^{\alpha_n} \in A^*$ and $w_B = b_1^{\beta_1} b_2^{\beta_2} \dots b_m^{\beta_m} \in B^*$, we define $r(w) = a_1 a_2 \dots a_n b_1 b_2 \dots b_m$. We write $w \sim w'$ if $r(w) = r(w')$. Finally, for any word $w = c_1 c_2 \dots c_{n-1} c_n \in C^*$ we let $w^{\dagger} = c_n c_{n-1} \dots c_2 c_1$.

Now fix $n \in \mathbb{N}$ and assume $A = \cup_{x=1}^n A_x$ and $B = \cup_{y=1}^n B_y$, where the sets $A_1,\dots,A_n,B_1,\dots,B_n$ are mutually disjoint. For words $w,v \in C^*$, we write $w \perp v$ if $r(w^{\dagger}v) = a_1 a_2 \dots a_m b_1 b_2 \dots b_{m'}$ with $a_i, a_{i+1} \in A_x$ for some $x \in [n]$ and some index $i$; or $b_j, b_{j+1} \in B_y$ for some $y \in [n]$, and some index $j$. Let $\Gamma^k = (\Gamma_{w,v})$ be a matrix indexed by words $w,v \in C^k$. Then $\Gamma^k$ is a \textbf{certificate of order $k$} for a tuple $\{p(a,b)\}_{a \in A^1, b \in B^1}$ if $\Gamma^k$ is positive-semidefinite and satisfies
\begin{enumerate}
    \item $\Gamma_{0,0} = 1$ (unitality),
    \item $\Gamma_{w,v} = \Gamma_{w'v'}$ whenever $w^{\dagger}v \sim (w')^{\dagger}v'$,
    \item $\Gamma_{w,v} = 0$ if $w \perp v$ (orthogonality)
\end{enumerate}
and $\Gamma_{a,b} = p(a,b)$ for all $a \in A^1$ and $b \in B^1$.

\begin{theorem}[\cite{NPA2008}] \label{thm: NPA}
Suppose that a tuple $\{p(a,b)\}$ has an order $k$ certificate for every $k \in \mathbb{N}$. Then there exists a C*-algebra $\mathfrak{A}$, projections $\{E_a : a \in A\}$ and $\{F_b: b \in B\}$ in $\mathfrak{A}$ with $[E_a,F_b] = 0$ for all $a \in A$ and $b \in B$, and a state $\phi: \mathfrak{A} \to \mathbb{C}$ such that
\[ p(a,b) = \phi(E_a F_b) \]
for all $a \in A^1$ and $b \in B^1$, where $E_0 = F_0 = I$. Moreover
\[ \sum_{a \in A_x} E_a \leq I \quad \text{and} \quad \sum_{b \in B_y} F_b \leq I \]
for all $x,y \in [n]$.
\end{theorem}

Suppose $\{p(a,b)\}$ satisfies the conditions of Theorem \ref{thm: NPA}. Let $\{E_a\}_{a \in A}$ and $\{F_b\}_{b \in B}$ be the corresponding projections and $\phi$ the corresponding state. Then for each $x,y \in [n]$, 
\[ \{E_a\}_{a \in A_x} \cup \{I - \sum_{a \in A_x} E_a\} \quad \text{and} \quad \{F_b\}_{b \in B_y} \cup \{I - \sum_{b \in B_y} F_b \} \]
define projection valued measures. On the other hand, if we are given projection-valued measures $\{E_{x,a}\}_{a=1}^{k}$ and $\{F_{y,b}\}_{b=1}^k$ in $\mathfrak{A}$ with $[E_{x,a}, F_{y,b}] = 0$ and a state $\phi$ on $\mathfrak{A}$, we can produce certificates by setting
\[ \Gamma_{\alpha,\beta} = \phi((P_{\alpha_1} \dots P_{\alpha_m})^{\dagger} P_{\beta_1} \dots P_{\beta_{m'}}) \]
where $\alpha, \beta$ are strings in the letters $\{e(x,a), f(y,b) : x,y \in [n], a,b \in [k-1] \} \cup \{0\}$, and where $P_{e(x,a)} = E_{x,a}$, $P_{f(y,b)} = F_{y,b}$, and $P_0 = I$. The missing projections $E_{x,k}$ and $F_{y,k}$ can be recovered as 
\[ I - \sum_{a \in A_x} P_a \quad \text{and} \quad I - \sum_{b \in B_y} P_b \]
where $A_x = \{e(x,a) : a \in [k-1]\}$ and $B_y = \{f(y,b) : b \in [k-1]\}$.

\begin{remark}
\emph{A quantum correlation can be equivalently determined by a sequence of certificates $\Gamma^k$ where the entries satisfy \textit{completeness} conditions corresponding to the relations
\[ \sum_{a=1}^k E_{x,a} = I \quad \text{and} \quad \sum_{b=1}^k F_{y,b} = I. \]
However, this imposes more constraints on the set of certificates than is needed, since the correlation can be determined entirely from the values $\{p(ab|xy): a,b \in [k-1]; x,y \in [n]\}$ together with the values of the marginal densities $\{p_A(a|x), p_B(b|y)\}$.}
\end{remark}

Let $\Gamma^k$ be a certificate of order $k$. Then $\Gamma^k$ has a \textbf{rank loop} if the submatrix $(\Gamma_{aw,bv})$ indexed by words $w,v \in C^{k-1}$ beginning with letters $a \in A^1$ and $b \in B^1$ has the same rank as the full matrix $\Gamma^k$.

\begin{theorem}[\cite{NPA2008}] \label{Thm: NPA with rank loop}
Suppose that the tuple $\{p(a,b)\}$ has an order $k$ certificate with a rank loop. Then there exists a finite-dimensional C*-algebra $\mathfrak{A}$, projections $\{E_a : a \in A\}$ and $\{F_b: b \in B\}$ in $\mathfrak{A}$ with $[E_a,F_b] = 0$ for all $a \in A$ and $b \in B$, and a state $\phi: \mathfrak{A} \to \mathbb{C}$ such that
\[ p(a,b) = \phi(E_a F_b) \]
for all $a \in A^1$ and $b \in B^1$, where $E_0 = F_0 = I$. Moreover
\[ \sum_{a \in A_x} E_a \leq I \quad \text{and} \quad \sum_{b \in B_y} F_b \leq I \]
for all $x,y \in [n]$.
\end{theorem}

Theorem \ref{Thm: NPA with rank loop} identifies quantum correlations among the set of quantum commuting correlations using rank loops. One instance in which a rank loop arises is when there exists a $k$-order certificate $\Gamma^k$ which is a flat extension of an order $k-1$ certificate $\Gamma^{k-1}$ (i.e. $\Rank(\Gamma^k) = \Rank(\Gamma^{k-1}))$. In this case, $\Gamma^k$ is identified as the submatrix of $(\Gamma_{aw,bv})$ with $a=b=0$, forcing the rank of the larger submatrix to equal the rank of $\Gamma^k$. As remarked in Section 4 of \cite{NPA2008}, there are no known efficient methods for producing such flat extensions.

\subsection{Synchronous correlations}

A correlation $p(a,b|x,y)$ is called \textbf{synchronous} if $p(a,b|x,x) = 0$ whenever $a \neq b$. The following characterization of synchronous quantum and quantum commuting correlations comes from \cite{PaulsenEtAlSynchronous}.

\begin{theorem}[Corollary 5.6 of \cite{PaulsenEtAlSynchronous}] \label{thm: sync corr}
Let $p(a,b|x,y)$ be a synchronous correlation. Then $p(a,b|x,y)$ is a quantum commuting (resp. quantum) correlation if and only if there exists a (resp. finite-dimensional) C*-algebra $\mathfrak{A}$, projection valued measures $\{E_{x,a}\}_{a=1}^k \subseteq \mathfrak{A}$, and a tracial state $\tau: \mathfrak{A} \to \mathbb{C}$ satisfying \[ p(a,b|x,y) = \tau(E_{x,a} E_{y,b}). \]
\end{theorem}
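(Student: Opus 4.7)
The plan is to prove both directions via the GNS construction. In the easy direction, traciality of $\tau$ supplies a right-multiplication anti-representation commuting with the left one, yielding two commuting PVMs. In the hard direction, synchronicity forces the two given PVMs to agree on the cyclic vector, and this identification lets us discard one side and produce the tracial state.

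\textbf{Sufficiency.} Suppose $\mathfrak{A}$, $\{E_{x,a}\}$, and a tracial state $\tau$ are given with $p(a,b|x,y)=\tau(E_{x,a}E_{y,b})$. Synchronicity is immediate because $E_{x,a}E_{x,b}=0$ for $a\neq b$. To realize $p$ in the two commuting-PVM form, I would form the GNS Hilbert space $H_\tau$ with cyclic vector $\xi=[1]$; traciality makes right multiplication $\rho$ a well-defined bounded anti-representation on $H_\tau$ whose image commutes with the standard left-multiplication representation $\pi$. Taking Alice's PVM to be $\pi(E_{x,a})$, Bob's to be $\rho(E_{y,b})$, and the state to be the vector state at $\xi$ gives $\langle \xi, \pi(E_{x,a})\rho(E_{y,b})\xi\rangle = \tau(E_{x,a}E_{y,b})$, as required. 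If $\mathfrak{A}$ is finite-dimensional, so is $H_\tau$, so we land in the finite-dimensional (quantum) setting.

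\textbf{Necessity.} Suppose $p$ is synchronous and admits the commuting-PVM representation $p(a,b|x,y)=\phi(E_{x,a}F_{y,b})$ in some C*-algebra $\mathfrak{A}$. Pass to the GNS representation of $\phi$ with cyclic vector $\xi$, and identify elements with their GNS representatives. The crucial first step is to show
\[ E_{x,a}\xi = F_{x,a}\xi \quad \text{for every } x,a. \]
For $a\neq b$, commutativity and idempotence yield $\|E_{x,a}F_{x,b}\xi\|^2 = \phi(F_{x,b}E_{x,a}F_{x,b}) = \phi(E_{x,a}F_{x,b}) = 0$, so $E_{x,a}F_{x,b}\xi=0$; inserting $\sum_b F_{x,b}=1$ collapses $E_{x,a}\xi$ to $E_{x,a}F_{x,a}\xi$, and the symmetric computation gives the same expression for $F_{x,a}\xi$. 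Let $\mathfrak{B}\subseteq B(H_\phi)$ be the C*-algebra generated by $\{E_{x,a}\}$ (finite-dimensional if $\mathfrak{A}$ was) and define $\tau(A)=\langle\xi,A\xi\rangle$. Then
\[ \tau(E_{x,a}E_{y,b}) = \langle\xi, E_{x,a}E_{y,b}\xi\rangle = \langle\xi, E_{x,a}F_{y,b}\xi\rangle = p(a,b|x,y), \]
using $E_{y,b}\xi=F_{y,b}\xi$.

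The main obstacle is showing $\tau$ is tracial. My plan is to associate to each monomial $A=E_{x_1,a_1}\cdots E_{x_n,a_n}$ its ``mirror'' $A^\flat := F_{x_n,a_n}\cdots F_{x_1,a_1}$, and to verify by induction (swapping $E_{x_n,a_n}$ for $F_{x_n,a_n}$ on $\xi$, then moving this $F$ leftward through the commuting $E$'s, and iterating) that $A\xi = A^\flat\xi$. A direct check gives $((A^*)^\flat)^*=A^\flat$, so $((A^*)^\flat)^*\xi = A\xi$. Then for any two monomials $A,B$ in $\mathfrak{B}$,
\[ \langle\xi, AB\xi\rangle = \langle A^*\xi, B\xi\rangle = \langle (A^*)^\flat\xi, B\xi\rangle = \langle\xi, ((A^*)^\flat)^* B\xi\rangle = \langle\xi, B((A^*)^\flat)^*\xi\rangle = \langle\xi, BA\xi\rangle, \]
where the fourth equality uses that the $F$-algebra commutes with $\mathfrak{B}$. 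Extending by linearity and norm-continuity gives traciality on all of $\mathfrak{B}$, completing the proof.
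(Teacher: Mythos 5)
The paper does not prove this statement; it simply cites it as Corollary~5.6 of~\cite{PaulsenEtAlSynchronous}, so there is no in-paper proof to compare against. Your argument is correct, and it is in fact the standard route taken in that reference: run the GNS construction, use synchronicity to force $E_{x,a}\xi = F_{x,a}\xi$ on the cyclic vector, and then leverage the commuting $F$-algebra as a substitute for ``right multiplication'' to verify that the vector state is tracial on the $E$-algebra. The sufficiency direction via the right-regular anti-representation is also the usual one, and your reduction $\tau(E_{x,a}E_{x,b})=0$ for $a\neq b$ correctly recovers synchronicity.

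Two small points worth making explicit if you were to write this up. First, in the sufficiency direction you should note that for a tracial state, boundedness of right multiplication on the GNS space follows from $\tau(b^*a^*ab)=\tau\bigl((a^*a)^{1/2}\,bb^*\,(a^*a)^{1/2}\bigr)\le\|b\|^2\tau(a^*a)$; for a non-tracial state, right multiplication need not even be well defined, so this is where traciality is actually used. Second, the paper's official definition of a quantum correlation is the tensor-product form $\bra{\phi}E_{x,a}\otimes F_{y,b}\ket{\phi}$, while your construction lands in the commuting-operator form on a finite-dimensional $B(H_\tau)$; the paper explicitly asserts that these are equivalent (finite-dimensional Tsirelson), so you are entitled to use that, but it is a step being invoked. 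With those remarks, the proof is complete and sound.
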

\noindent For another characterization in terms of affine slices of projections of the completely positive semidefinite cone, see Corollary 5.5 of \cite{Sikora2017LinearCF}.

\begin{remark} \label{rmk: faithful}
\emph{In Theorem \ref{thm: sync corr}, we may assume without loss of generality that $\tau$ is faithful. This is because whenever $\mathfrak{A}$ is a C*-algebra and $\tau: \mathfrak{A} \to \mathbb{C}$ is a tracial state, we can define a new tracial state $\widehat{\tau}$ on the C*-algebra $\mathfrak{B} = \mathfrak{A} / \mathcal{J}$, where
\[ \mathcal{J} = \{x \in \mathfrak{A} : \tau(x^*x) = 0 \} \]
by setting $\widehat{\tau}(x + \mathcal{J}) = \tau(x)$. The subspace $\mathcal{J} \subseteq \mathfrak{A}$ is a self-adjoint ideal in $\mathfrak{A}$ so that $\mathfrak{B}$ is a C*-algebra. It is clear that $\widehat{\tau}$ is faithful on $\mathfrak{B}$. If $P, Q \in \mathfrak{A}$ are projections, then $\widehat{P} := P + \mathcal{J}, \widehat{Q} := Q + \mathcal{J} \in \mathfrak{B}$ are projections and $\tau(PQ) = \widehat{\tau}(\widehat{P} \widehat{Q})$.}
\end{remark}

We will make use of a family of matrices which are closely related to the set of synchronous correlations. The following definitions were introduced in \cite{RordamMusatNonClosure}.

\begin{definition}
Let $n \in \mathbb{N}$. Let $D_{qc}(n)$ be the set of tuples of real numbers $\{p(x,y)\}_{x,y \in [N]}$ for which there exists a C*-algebra $\mathfrak{A}$ and projections $P_1, P_2, \dots, P_n \in \mathfrak{A}$, and a faithful tracial state $\tau: \mathfrak{A} \to \mathbb{C}$ such that $p(x,y) = \tau(P_x P_y)$ for each $x,y \in [n]$. We say that $\{p(x,y)\} \in D_q(n)$ if the same conditions are met, but with the restriction that $\mathfrak{A}$ is finite-dimensional.
\end{definition}

It was shown in \cite{RordamMusatNonClosure} that the set $D_q(n)$ (resp. $D_{qc}(n)$) is affinely isomorphic to the set of a synchronous quantum correlations (resp. quantum commuting correlations) with $n$ questions and $k=2$ answers. For $k > 2$, it was shown in \cite{RussellTwoOutcome20} and \cite{HarrisThesis} that a particular affine slice of the set $D_q(nk)$ (resp. $D_{qc}(nk)$) is affinely isomorphic to the set of synchronous quantum correlations (resp. quantum commuting correlations) with parameters $n$ and $k$. Consequently, characterizing the structure of the set $D_q(N)$ (resp. $D_{qc}(N)$) with $N=nk$ is equivalent to characterizing the structure of the set of synchronous quantum (resp. quantum commuting) correlations. Therefore, we will focus our attention for the rest of the paper on the sets $D_q(N)$ and $D_{qc}(N)$.

\section{A synchronous NPA hierarchy}

In this section, we will characterize, for each $N \in \mathbb{N}$, the set of correlations $D_{qc}(N)$ in terms of positive semidefinite matrices indexed over the set $[N]^*$.

\begin{lemma} \label{lem: infinite gram}
Let $N \in \mathbb{N}$ and let $\Gamma$ be a matrix indexed by words $[N]^*$. Suppose that for each $k \in \mathbb{N}$, the finite matrix $\Gamma^k = (\Gamma_{\alpha, \beta})_{\alpha, \beta \in [N]^k}$ is positive-semidefinite. Then there exist a sequence of finite dimensional Hilbert spaces $H_1, H_2, \dots$ and a sequence of isometric linear maps $W_1: H_1 \to H_2, W_2: H_2 \to H_3, \dots$ such that for every $k \in \mathbb{N}$, 
\begin{enumerate}
    \item the Hilbert space $H_k$ is spanned by a set of vectors $\{ \ket{\alpha,k} : \alpha \in [N]^k \} \subseteq H_k$
    \item for every $\alpha \in [N]^k$, $W_k \ket{\alpha,k} = \ket{\alpha,k+1}$, and
    \item for every $\alpha, \beta \in [N]^k$, $\Gamma_{\alpha,\beta} = \braket{\beta,k}{\alpha,k}$.
\end{enumerate}
\end{lemma}

\begin{proof}
Let $k \in \mathbb{N}$. By the Gram decomposition of the postive semidefinite matrix $\Gamma^k$, there exists a finite dimensional Hilbert space $H_k$ and vectors $\{\ket{\alpha,k} : \alpha \in [N]^k\} \subseteq H_k$ spanning $H_k$ such that for every $\gamma, \beta \in [N]^k$, $\braket{\gamma,k}{\beta,k} = \Gamma_{\gamma, \beta}$. Likewise, there exists a Hilbert space $H_{k+1}$ spanned by vectors $\{ \ket{\beta,k+1} : \beta \in [N]^{k+1}\}$ such that for every $\gamma, \beta \in [N]^{k+1}$, $\braket{\gamma,k+1}{\beta,k+1} = \Gamma_{\gamma, \beta}$. Define a function $W_k$ from from the set $\{\ket{\alpha,k} : \alpha \in [N]^k\}$ to the set $\{\ket{\alpha, k+1}, \alpha \in [N]^k\}$ by $W_k \ket{\alpha,k} = \ket{\alpha,k+1}$ for each $\alpha \in [N]^k$. We first show that $W_k$ extends to a linear map from $H_k$ to $H_{k+1}$. To see this, observe that for every set of scalar coefficients $\{ t_{\alpha}, r_{\beta} : \alpha, \beta \in [N]^k\}$,
\begin{eqnarray}
\langle (\sum t_{\alpha} \ket{\alpha,k+1}), (\sum r_{\beta} \ket{\beta,k+1}) \rangle & = & \sum \overline{t_{\alpha}} r_{\beta} \braket{\alpha, k+1}{\beta, k+1} \nonumber \\
& = & \sum \overline{t_{\alpha}} r_{\beta} \braket{\alpha, k+1}{\beta, k+1} \nonumber \\
& = & \sum \overline{t_{\alpha}} r_{\beta} \Gamma_{\alpha,\beta} \nonumber \\
& = & \sum \overline{t_{\alpha}} r_{\beta} \braket{\alpha,k}{\beta,k} \nonumber \\
& = & \langle (\sum t_{\alpha} \ket{\alpha,k}), (\sum r_{\beta} \ket{\beta,k}) \rangle. \nonumber
\end{eqnarray}
Thus, if $\sum t_{\alpha} \ket{\alpha,k} = 0$, then \[ 0 = \langle (\sum t_{\alpha} \ket{\alpha,k}), (\sum t_{\alpha} \ket{\alpha,k}) \rangle = \langle (\sum t_{\alpha} \ket{\alpha,k+1}), (\sum t_{\alpha} \ket{\alpha,k+1}) \rangle. \]
It follows that setting $W_k(\sum t_{\alpha} \ket{\alpha,k}) = \sum t_{\alpha} \ket{\alpha,k+1}$ yields a well-defined linear extension of $W_k$. To see that $W_k$ is an isometry from $H_k$ to $H_{k+1}$, it suffices to check that $W_k^{\dagger} W_k$ is the identity on $H_k$. This follows from the observation that for every set of scalar coefficients $\{ t_{\alpha}, r_{\beta} : \alpha, \beta \in [N]^k\}$
\begin{eqnarray}
\langle (\sum t_{\alpha} \ket{\alpha,k}), W_k^{\dagger} W_k (\sum r_{\beta} \ket{\beta,k}) \rangle & = & \langle W_k (\sum t_{\alpha} \ket{\alpha,k}), W_k (\sum r_{\beta} \ket{\beta,k}) \rangle \nonumber \\
& = & \langle (\sum t_{\alpha} \ket{\alpha,k+1}), (\sum r_{\beta} \ket{\beta,k+1}) \rangle \nonumber \\
& = & \langle (\sum t_{\alpha} \ket{\alpha,k}), (\sum r_{\beta} \ket{\beta,k}) \rangle. \nonumber
\end{eqnarray} 
So $W_k^{\dagger} W_k$ is the identity map on $H_k$. \end{proof}

We briefly describe the construction for an inductive limit of a sequence of finite dimensional Hilbert spaces. Let $\{(H_k,W_k)\}_{k=0}^\infty$ be a sequence of pairs, each pair consisting of a finite dimensional Hilbert space $H_k$ and an isometry $W_k: H_k \to H_{k+1}$. Whenever $k < l$ we let $W_{k,l} := W_{l-1} W_{l-2} \dots W_{k+1} W_k: H_k \to H_l$. Let $\widehat{H}$ denote the disjoint union $\cup_k H_k$. Then we can define a pre-inner product on $\widehat{H}$ via $\langle x_l, x_k \rangle = \langle x_l, W_{k,l} x_k \rangle$ for each $x_l \in H_l$ and $x_k \in H_k$ when $k \leq l$ and $\langle x_l, x_k \rangle = \langle W_{l,k} x_l, x_k \rangle$ for each $x_l \in H_l$ and $x_k \in H_k$ when $l < k$. Let $\mathcal{N} = \{ x \in \widehat{H} : \langle x, x \rangle = 0\}$. Let $\lim_k H_k$ denote the completion of $\widehat{H}/\mathcal{N}$ with respect to this inner product. Then $\lim_k H_k$ is a Hilbert space with dimension $\lim_k \text{dim}(H_k)$. Moreover, for each $k \in \mathbb{N}$, there exists a natural isometry $V_k: H_k \to \lim_k H_k$ such that $V_{l} W_{k,l} H_k = V_k H_k$ for each $k < l$. Informally, we can use the $W_k$'s to identify $H_k$ as a subspace of $H_{k+1}$ and $V_k$ to identify $H_k$ as a subspace of $\lim_k H_k$, so that we have
$H_0 \subseteq H_1 \subseteq H_2 \subseteq \dots \subseteq \lim_k H_k$.

From the above construction and Lemma \ref{lem: infinite gram} we get the following corollary.

\begin{corollary} \label{cor: infinite gram}
Let $\Gamma$ be a matrix indexed by words in $[N]^*$. Assume that for each $k \in \mathbb{N}$, the finite matrix $\Gamma^k = (\Gamma_{\alpha, \beta})_{\alpha, \beta \in [N]^k}$ is positive-semidefinite. Then there exists a Hilbert space $H$ and vectors $\{ \ket{\alpha} : \alpha \in [N]^*\} \subseteq H$ such that for each $\alpha, \beta \in [N]^*$, $\Gamma_{\alpha,\beta} = \braket{\alpha}{\beta}$.
\end{corollary}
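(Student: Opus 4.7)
The plan is to directly combine Lemma \ref{lem: infinite gram} with the inductive limit construction described in the paragraph just preceding the corollary. First, I would apply the lemma to obtain a sequence of finite dimensional Hilbert spaces $H_0, H_1, H_2, \dots$, vectors $\ket{\alpha, k} \in H_k$, and isometries $W_k: H_k \to H_{k+1}$ satisfying $W_k \ket{\alpha, k} = \ket{\alpha, k+1}$ and $\braket{\alpha,k}{\beta,k} = M_{\alpha,\beta}$. Then I would form the inductive limit $H := \lim_k H_k$, which comes equipped with isometric embeddings $V_k: H_k \to H$ satisfying the compatibility relation $V_{l} W_{k,l} = V_k$ on $H_k$ for all $k < l$.

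Next, I would define the desired vectors in $H$. For each $\alpha \in [N]^*$, pick any $k \geq |\alpha|$ and set $\ket{\alpha} := V_k \ket{\alpha, k}$. The main routine check is that this definition is independent of the chosen $k$: if $k \leq l$, then $V_l \ket{\alpha, l} = V_l W_{k,l} \ket{\alpha, k} = V_k \ket{\alpha, k}$, where the first equality uses iterated applications of $W_j \ket{\alpha, j} = \ket{\alpha, j+1}$ and the second equality uses the compatibility of the $V_k$'s with the connecting isometries.

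Finally, I would verify the inner product identity. Given any $\alpha, \beta \in [N]^*$, choose $k \geq \max(|\alpha|, |\beta|)$. Since each $V_k$ is an isometry,
\[
\braket{\alpha}{\beta} = \langle V_k \ket{\alpha, k}, V_k \ket{\beta, k}\rangle_H = \braket{\alpha, k}{\beta, k}_{H_k} = M_{\alpha, \beta},
\]
as required. I don't expect any serious obstacle here; the corollary is essentially a bookkeeping consequence of the lemma together with the standard inductive limit formalism, and the only thing requiring a moment of care is the well-definedness of $\ket{\alpha}$ across different choices of $k$.
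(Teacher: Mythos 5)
Your argument is correct and is exactly the route the paper intends: the paper offers no explicit proof, only the remark that the corollary follows from Lemma \ref{lem: infinite gram} together with the inductive limit construction sketched just before, and your write-up (defining $\ket{\alpha} := V_k\ket{\alpha,k}$, checking well-definedness via $V_l W_{k,l} = V_k$ and $W_{k,l}\ket{\alpha,k}=\ket{\alpha,l}$, then using that each $V_k$ is isometric) fills in precisely those details.
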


As in the original NPA hierarchy, we will be interested in positive semidefinite matrices $\Gamma$ indexed by words in $[N]^*$ whose entries satisfy certain relations. We will keep track of these relations by introducing an equivalence relation $\sim$ on $[N]^* \times [N]^*$. In the following, for each $\gamma \in [N]^*$ with $\gamma = g_1 g_2 \dots g_k$ and each permutation $\sigma$ of the set $[k]$, we let $\sigma(\gamma)$ denote the word $g_{\sigma(1)} g_{\sigma(2)} \dots g_{\sigma(k)}$. We define $\gamma^{\dagger} := g_k g_{k-1} \dots g_2 g_1$; i.e. $\gamma^{\dagger}$ is the word $\gamma$ written in reverse order.

\begin{definition} \label{defn: tracial relations}
Let $\alpha \in [N]^k$ and assume $\alpha = a_1^{r_1} a_2^{r_2} \dots a_n^{r_n}$ where $r_1, r_2, \dots, r_n \in \mathbb{N}$ with $\sum_i r_i \leq k$ and $a_i \neq a_{i+1}$ for each $i=1,2, \dots, n-1$. Then we define $\eta(\alpha) := a_1 a_2 \dots a_n \in [N]^n$ when $a_1 \neq a_n$ and $\eta(\alpha) := a_1 a_2 \dots a_{n-1} \in [N]^{n-1}$ otherwise. Given pairs $(\alpha, \beta), (\gamma, \delta) \in [N]^* \times [N]^*$, we say that $(\alpha, \beta) \sim (\gamma, \delta)$ if and only if $\eta(\alpha^{\dagger} \beta) = \sigma (\eta(\gamma^{\dagger} \delta))$ for some cyclic permutation $\sigma$.
\end{definition}

\begin{example} \label{ex: tracial relations}
\emph{We have $(32,1412) \sim (3221,14)$, since $\eta((32)^\dagger(1412)) = \eta(231412) = 23141$ and $\eta((3221)^\dagger(14) = \eta(122314) = 12314$, which are equivalent by a cyclic permutation.} 
\end{example}

\begin{remark} \label{rmk: tracial relations}
\emph{Definition \ref{defn: tracial relations} is motivated by the properties of projections and tracial states. For example, suppose that $\mathfrak{A}$ is a C*-algebra, $\tau: \mathfrak{A} \to \mathbb{C}$ is a tracial state, and $P_1, P_2, P_3, P_4 \in \mathfrak{A}$ are projections. Then
\[ \tau((P_3 P_2)^{\dagger} P_1 P_4 P_1 P_2) = \tau(P_2 P_3 P_1 P_4 P_1 ) = \tau(P_1 P_2 P_3 P_1 P_4) = \tau((P_3 P_2 P_2 P_1)^{\dagger} P_1 P_4). \]
This equality corresponds to the relation $(32,1412) \sim (3221,14)$ demonstrated in Example \ref{ex: tracial relations}.}
\end{remark}

The following Theorem characterizes the elements of $D_{qc}(N)$.

\begin{theorem} \label{thm: Synchronous NPA hierarchy}
Let $N \in \mathbb{N}$. Then $\{p(x,y)\} \in D_{qc}(N)$ if and only if there exists an infinite matrix $\Gamma$ indexed by the elements of $[N]^*$ with $\Gamma_{0, 0} = 1$ satisfying the following properties:
\begin{enumerate}
    \item For each $k \in \mathbb{N}$, the finite matrix $\Gamma^k = (\Gamma_{\alpha, \beta})_{\alpha, \beta \in [N]^k}$ is positive semidefinite.
    \item Whenever $(\alpha, \beta) \sim (\delta, \gamma)$ we have $\Gamma_{\alpha, \beta} = \Gamma_{\delta, \gamma}$.
    \item For each $x,y \in [N]$ we have $p(x,y) = \Gamma_{x,y}$.
\end{enumerate}
\end{theorem}

\begin{proof}
First assume that $\{p(x,y)\} \in D_{qc}(N)$. Then there exists a C*-algebra $\mathfrak{A}$, projections \[ P_1, P_2, \dots, P_N \in \mathfrak{A}, \] and a tracial state $\tau: \mathfrak{A} \to \mathbb{C}$ such that for every $x,y \in [N]$, $p(x,y) = \tau(P_x P_y)$. For each $\alpha = a_1 a_2 \dots a_k \in [N]^*$, let $P_{\alpha} := P_{a_1} P_{a_2} \dots P_{a_k}$, and let $P_{0} := I$. For each $\alpha, \beta \in [N]^*$ let $\Gamma_{\alpha, \beta} = \tau(P_{\alpha}^{\dagger} P_{\beta})$. Then $\Gamma_{0, 0} = \tau(I) = 1$. To prove (1), it suffices to check that the matrix of products $(P_{\alpha}^{\dagger} P_{\beta})_{\alpha, \beta}$ is positive in $M_n(\mathfrak{A})$, where $n = |[N]^k|$, since $\tau$ is completely positive (c.f. Proposition 3.8 of \cite{paulsen2002completely}). However, this follows from the observation that \[ (P_{\alpha}^{\dagger} P_{\beta})_{\alpha, \beta} = R^{\dagger}R \] where $R \in M_{1,n}(\mathfrak{A})$ is the row operator given by $R = [ P_{\alpha_1} P_{\alpha_2} \dots P_{\alpha_n}]$ and $\{\alpha_1, \dots, \alpha_n\}$ is an enumeration of $[N]^k$. To prove (2), we observe that whenever $(\alpha, \beta) \sim (\gamma, \delta)$ we have $\tau(P_{\alpha}^{\dagger} P_{\beta}) = \tau(P_{\gamma}^{\dagger} P_{\delta})$ since $\tau$ is cyclic and each $P_i$ satisfies $P_i^2 = P_i$ (see Remark \ref{rmk: tracial relations}). It is clear that (3) is satisfied. Therefore a matrix $\Gamma$ with the desired properties exists whenever $\{p(x,y)\} \in D_{qc}(N)$.

Now assume that we are given a matrix $\Gamma$ indexed over $[N]^*$ with $\Gamma_{0, 0} = 1$ and satisfying properties (1) and (2). For each $x,y \in [N]$, let $p(x,y) = \Gamma_{x,y}$. We will show that $\{p(x,y)\} \in D_{qc}(N)$. 

By Corollary \ref{cor: infinite gram}, there exists a Hilbert space $H$ and vectors $\{ \ket{\alpha} : \alpha \in [N]^*\} \subseteq H$ with dense span in $H$ such that for every $\alpha, \beta \in [N]^*$, $\Gamma_{\alpha, \beta} = \braket{\alpha}{\beta}$. For each $x \in [N]$, let $P_x$ denote the orthogonal projection onto the subspace of $H$ densely spanned by the vectors $\{ \ket{x \alpha}: \alpha \in [N]^*\}$. Clearly $P_x \ket{x \alpha} = \ket{x \alpha}$ for each $\alpha \in [N]^*$. Moreover, if $\alpha, \beta \in [N]^*$ then 
\begin{eqnarray}
\braket{x \beta}{\alpha} & = & \Gamma_{x \beta, \alpha} \nonumber \\
& = & \Gamma_{x \beta, x \alpha} \nonumber \\
& = & \braket{x \beta}{x \alpha} \nonumber
\end{eqnarray}
since $(x \beta, \alpha) \sim (x \beta, x \alpha)$. Since the range of $P_x$ is densely spanned by the set of vectors $\{\ket{x \beta} : \beta \in [N]^*\}$, we conclude that $P_x \ket{\alpha} = P_x \ket{x \alpha} = \ket{x \alpha}$ for each $\alpha \in [N]^*$.

As before, whenever $\alpha = a_1 a_2 \dots a_k \in [N]^k$, let $P_{\alpha}$ denote the product $P_{a_1} P_{a_2} \dots P_{a_k}$. Because $P_{x} \ket{\beta} = \ket{x \beta}$ for each $x \in [N]$ and $\beta \in [N]^*$, we see that $P_{\alpha} \ket{0} = \ket{\alpha}$ for each $\alpha \in [N]^*$. Hence $\Gamma_{\alpha, \beta} = \bra{0} P_{\alpha}^{\dagger} P_{\beta} \ket{0}$ for each $\alpha, \beta \in [N]^*$. Let $\mathfrak{A}$ denote the C*-algebra generated by the projections $P_1, \dots, P_N$ in $B(H)$ and define $\tau: \mathfrak{A} \to \mathbb{C}$ by $\tau(T) = \bra{0} T \ket{0}$ for each $T \in \mathfrak{A}$. Since $\braket{0}{0} = \Gamma_{0, 0} = 1$, $\tau$ defines a state on $\mathfrak{A}$. Furthermore, notice that for each $\alpha \in [N]^*$ and each cyclic permutation $\sigma$
\begin{eqnarray}
\tau(P_{\alpha}) & = & \bra{0} P_{\alpha} \ket{0} \nonumber \\
& = & \braket{0}{\alpha} \nonumber \\
& = & \Gamma_{0, \alpha} \nonumber \\
& = & \Gamma_{0, \sigma(\alpha)} \nonumber \\
& = & \braket{0}{\sigma(\alpha)} \nonumber \\
& = & \bra{0} P_{\sigma(\alpha)} \ket{0} \nonumber \\
& = & \tau(P_{\sigma(\alpha)}) \nonumber
\end{eqnarray} where we have used $(0, \alpha) \sim (0, \sigma(\alpha))$. It follows that $\tau$ is tracial on the $*$-algebra generated by the $P_x$'s and hence $\tau$ is a tracial state on $\mathfrak{A}$. If $\tau$ is not faithful, we can replace $\tau$ with a faithful tracial state on a quotient $\mathfrak{A} / \mathcal{J}$ of $\mathfrak{A}$ and replace each $P_x$ with $P_x + \mathcal{J}$, as described in Remark \ref{rmk: faithful}. Therefore the identification $p(x,y) := \Gamma_{x,y}$ defines a correlation $\{p(x,y)\} \in D_{qc}(N)$ since, for each $x,y \in [N]$, $p(x,y) = \tau(P_x P_y)$.
\end{proof}

Assume that $\{p(x,y)\} \in D_{qc}(N)$ and let $\Gamma$ be a positive semidefinite matrix as described in Theorem \ref{thm: Synchronous NPA hierarchy}. Then the submatrices $\Gamma^k = (\Gamma_{\alpha, \beta})_{\alpha, \beta \in [N]^k}$ each satisfy $\Gamma_{0,0}^k = 1$ and conditions 1, 2, and 3 of Theorem \ref{thm: Synchronous NPA hierarchy}. In general, any matrix $\Gamma^k$ indexed by the elements of $[N]^k$ is called a \textbf{certificate of order $k$} for $\{p(x,y)\}$ if $\Gamma_{0,0}^k = 1$ and:
\begin{enumerate}
    \item $\Gamma^k$ is positive semidefinite
    \item if $\alpha, \beta, \delta, \gamma \in [N]^k$ and $(\alpha,\beta) \sim (\delta, \gamma)$, then $\Gamma_{\alpha,\beta}^k = \Gamma_{\delta,\gamma}^k$, and
    \item for each $x,y \in [N]$, $\Gamma^k_{x,y} = p(x,y)$.
\end{enumerate}

\begin{corollary} \label{cor: Syn NPA Hierarchy}
Let $N \in \mathbb{N}$. Then $\{p(x,y)\} \in D_{qc}(N)$ if and only if there exists a sequence of certificates $\Gamma^1, \Gamma^2, \dots$ for $\{p(x,y)\}$.
\end{corollary}

\begin{proof}
Let $\{p(x,y)\}_{x,y \in [N]}$ be a tuple of real numbers. Suppose there exists a sequence of certificates $\Gamma^1, \Gamma^2, \dots$ for $\{p(x,y)\}$. We will establish the existence of a single infinite matrix $\Gamma$ indexed by words in $[N]^*$ which satisfies the conditions of Theorem \ref{thm: Synchronous NPA hierarchy}. To establish this, we mimick the arguments of Theorem 8 and Appendix B of \cite{NPA2008}, summarized here for the sake of completeness. Let $\widehat{\Gamma}^k$ denote the infinite matrix indexed by $[N]^*$ with
\[ \widehat{\Gamma}^k_{\alpha,\beta} = \begin{cases} \Gamma^k_{\alpha,\beta} & \alpha,\beta \in [N]^k \\ 0 & \text{else} \end{cases} \]
regarded as an element of $l^{\infty}([N]^*)$. We claim that the sequence $\{ \widehat{\Gamma}^k\}_{k=1}^\infty$ admits a convergent subsequence. This follows from the Banach-Alaoglu Theorem provided that the sequence $\{ \widehat{\Gamma}^k\}_{k=1}^\infty$ resides in the unit ball of $l^{\infty}([N]^*)$. To prove this, it suffices to establish that the diagonal elements of each certificate $\Gamma^k$ are bounded by 1, since each $\Gamma^k$ is positive semidefinite. For $m \in \mathbb{N}$ with $m < k$, let $\alpha \in [N]^m$ and $x \in [N]$. Then since the submatrix
\[ \begin{bmatrix} \Gamma_{\alpha, \alpha}^k & \Gamma_{\alpha, x \alpha}^k \\ \Gamma_{x \alpha, \alpha}^k & \Gamma_{x \alpha, x \alpha}^k \end{bmatrix} \]
is positive semidefinite, we have $|\Gamma_{x,\alpha}^k|^2 \leq \Gamma_{\alpha, \alpha}^k \Gamma_{x \alpha, x \alpha}^k$. Since $(x \alpha, x \alpha) \sim (x \alpha, \alpha)$, we have $\Gamma_{x,\alpha}^k = \Gamma_{x\alpha,x\alpha}^k$ and hence $\Gamma_{x \alpha, x \alpha}^k \leq \Gamma_{\alpha, \alpha}^k$. The claim follows by induction on $m$, since $\Gamma_{0,0}^k = 1$.

Conversely, if $\{p(x,y)\} \in D_{qc}(N)$, then Theorem \ref{thm: Synchronous NPA hierarchy} implies the existence of an infinite matrix $\Gamma$ indexed by $[N]^*$ for which the finite submatrices $\Gamma^k = (\Gamma_{\alpha, \beta})_{\alpha, \beta \in [N]^k}$ produce a sequence of certificates for $\{p(x,y)\}$.
\end{proof}

\begin{remark} \label{rmk: advantage}
\emph{We conclude this section by noting some potential advantages for using Theorem \ref{thm: Synchronous NPA hierarchy} to certify elements of $D_{qc}(N)$ rather than Theorem \ref{thm: NPA}. First, notice that an order $k$ certificate for the synchronous hierarchy is indexed over $[N]^k$, whereas the order $k$ certificate of the NPA hierarchy is indexed over words in $C^k$ where $C$ denotes the disjoint union of two copies of $[N]$. The set $[N]^k$ contains $\sum_{n=0}^k N^n$ elements, whereas $C^k$ contains $\sum_{n=0}^k (2N)^n$ elements. Thus the matrices considered in the synchronous hierarchy are smaller. This difference can be narrowed by recognising that many elements of $C$ are equivalent due to the commutativity property, but the certificates still remain larger in general. For example, $\Gamma^1$ in the synchronous hierarchy is an $(N+1) \times (N+1)$ matrix, while it is a $(2N+1) \times (2N+1)$ matrix in the original NPA hierarchy. Secondly, the orthogonality constraint that is needed in the original NPA hierarchy is redundant in the synchronous hierarchy. This is because whenever $\tau$ is a faithful tracial state and $P$ and $Q$ are projections, $\tau(PQ) = 0$ implies that $PQ = 0$, since $\tau(PQ)=\tau(QP^2Q) = \tau((PQ)^{\dagger}(PQ))$. On the other hand, if $\phi$ is a faithful (not necessarily tracial) state, then $\phi(PQ)=0$ does not imply that $PQ = 0$. Therefore the orthogonality condition must be imposed in the original hierarchy to ensure that projection-valued measures consist of mutually orthogonal projections. Finally, we note that the synchronous hierarchy also has the minor advantage that the first certificate $\Gamma^1$ is uniquely determined by the correlation $\{p(x,y)\}$. Given the matrix $\{p(x,y)\}$, we form the corresponding certificate $\Gamma^1$ by appending a single row and column corresponding to the empty word $0$. The entries $\Gamma_{0,x}$ and $\Gamma_{x,0}$ for $x \in [N]$ are uniquely determined since $(x,0) \sim (x,x) \sim (0,x)$ implies $\Gamma_{x,0}=\Gamma_{x,x}=\Gamma_{0,x}$. The entry $\Gamma_{0,0}$ is determined by the requirement $\Gamma_{0,0}=1$. In the original NPA hierarchy, there are $2N^2$ entries of the first certificate which are not determined by the correlation.}
\end{remark}

\section{The rank loop}

In the original NPA hierarchy, quantum correlations are distinguished from quantum commuting correlations by the existence of a rank loop in an order $k$ certificate $\Gamma^k$. As described in Subsection \ref{subsec: NPA}, a rank loop is a submatrix of the certificate $\Gamma^k$ with the same rank as $\Gamma^k$ and with indices of the form $(x \alpha, y \beta)$ where $\alpha$ and $\beta$ are words of length at most $k-1$, $x$ corresponds to one of Alice's projections and $y$ corresponds to one of Bob's projections. A rank loop also arises whenever $\Rank(\Gamma^{k-1}) = \Rank(\Gamma^k)$.

In the synchronous hierarchy, there is no need to index Alice and Bob's projections differently since they share the same set of projections. Thus the definition of the rank loop does not extend to the synchronous hierarchy directly. Instead, we say that an order $k$ certificate $\Gamma^k$ has a \textbf{rank loop} if the submatrix $\Gamma^{k-1} = (\Gamma^k_{\alpha, \beta})_{\alpha, \beta \in [N]^{k-1}}$ has the same rank as $\Gamma^k$ (i.e. $\Gamma^k$ is a flat extension of $\Gamma^{k-1}$). We now show that in the synchronous hierarchy, elements of $D_q(N)$ are characterized as correlations admitting an order $k$ certificate with a rank loop.

\begin{theorem} \label{thm: rank loop}
Let $N \in \mathbb{N}$. Then $\{p(x,y)\} \in D_q(N)$ if and only if there exists an integer $m \in \mathbb{N}$ and an order $m+1$ certificate $\Gamma^{m+1}$ with a rank loop. In particular, if there exists an order $m+1$ certificate for $\{p(x,y)\}$ with a rank loop, then there exists a C*-algebra $\mathfrak{A}$, projections $P_1, \dots, P_N \in \mathfrak{A}$, and a faithful tracial state $\tau: \mathfrak{A} \to \mathbb{C}$ such that, for every $x,y \in [N]$, $p(x,y) = \tau(P_x P_y)$ and such that $\mathfrak{A}$ is spanned by operators of the form $\{P_{a_1} P_{a_2} \dots P_{a_{m}} : a_1 a_2 \dots a_{m} \in [N]^m \}$ (where $P_0 := I$).
\end{theorem}

\begin{proof}
First assume that $\{p(x,y)\} \in D_{q}(N)$. Then there exists a finite dimensional C*-algebra $\mathfrak{A}$, projections $P_1, P_2, \dots, P_N \in \mathfrak{A}$ and a faitful tracial state $\tau: \mathfrak{A} \to \mathbb{C}$ such that for every $x,y \in [N]$, $p(x,y) = \tau(P_x P_y)$. We may assume without loss of generality that $\mathfrak{A}$ is generated by the projections $P_1, P_2, \dots, P_N$ as a C*-algebra. For each $\alpha \in [N]^*$ with $\alpha = a_1 a_2 \dots a_k$, set $P_{\alpha} = P_{a_1} P_{a_2} \dots P_{a_k}$, and let $\Gamma_{\alpha, \beta} = \tau(P_{\alpha}^{\dagger} P_{\beta})$ for each $\alpha, \beta \in [N]^k$. By the GNS construction for C*-algebras (c.f. Chapter 1, Section 7 of \cite{ConwayOperatorTheoryBook}), there exists a Hilbert space $H$, a unit vector $\ket{\phi} \in H$ and a $*$-homomorphism $\pi: \mathfrak{A} \to B(H)$ such that $\tau(P_{\alpha}^{\dagger} P_{\beta}) = \bra{\phi} \pi(P_{\alpha}^{\dagger} P_{\beta}) \ket{\phi}$ for each $\alpha, \beta \in [N]^*$. Since $\dim(\mathfrak{A}) < \infty$, there exists $m$ such that $\mathfrak{A}$ is spanned by $\{P_{\alpha} : \alpha \in [N]^m\}$. Let $\alpha_1, \alpha_2, \dots, \alpha_M$ be an enumeration of $[N]^m$, and let $\alpha_{M+1}, \alpha_{M+2}, \dots, \alpha_{M'}$ be an enumeration of $[N]^{m+1} \setminus [N]^{m}$. Then since
\[ \dim( \Span \{\pi(P_{\alpha}) \ket{\phi} : \alpha \in [N]^m \}) = \dim( \Span \{\pi(P_{\alpha}) \ket{\phi} : \alpha \in [N]^{m+1} \}) \]
we must conclude that $\Rank(\Gamma^m) = \Rank(\Gamma^{m+1})$, since
\[ \Gamma^m = \begin{bmatrix} \bra{\phi} P_{\alpha_1}^{\dagger} \\ \vdots \\ \bra{\phi} P_{\alpha_M}^{\dagger} \end{bmatrix} \begin{bmatrix} P_{\alpha_1} \ket{\phi} & \dots & P_{\alpha_M} \ket{\phi} \end{bmatrix} \quad \text{and} \quad \Gamma^{m+1} = \begin{bmatrix} \bra{\phi} P_{\alpha_1}^{\dagger} \\ \vdots \\ \bra{\phi} P_{\alpha_{M'}}^{\dagger} \end{bmatrix} \begin{bmatrix} P_{\alpha_1} \ket{\phi} & \dots & P_{\alpha_{M'}} \ket{\phi} \end{bmatrix}. \]

On the other hand, assume that $\Gamma^{m+1}$ is an order $m+1$ certificate for $\{p(x,y)\}$ with a rank loop. By Lemma \ref{lem: infinite gram}, there exists a Hilbert space $H$ and vectors $\ket{\alpha} \in H$ for each $\alpha \in [N]^{m+1}$ such that $\Gamma_{\alpha, \beta} = \braket{\alpha}{\beta}$ for each $\alpha, \beta \in [N]^{m+1}$. Since $\Rank(\Gamma^m) = \Rank(\Gamma^{m+1})$ we see that
\begin{equation} \label{eqn: rank loop span}
\dim( \Span \{ \ket{\alpha} : \alpha \in [N]^m\}) = \dim( \Span \{ \ket{\alpha} : \alpha \in [N]^{m+1}\}). 
\end{equation}
Therefore every vector $\ket{\alpha} \in H_{m+1}$ can be written as a linear combination of vectors of the form $\ket{ \alpha }$ where $\alpha \in [N]^m$. Hence, we may identify the Hilbert spaces $H_{m+1}$ and $H_m$. For each $x \in [N]$, let $P_x: H_m \to H_m$ denote the projection onto the subspace spanned by the vectors $\ket{x \alpha}$ for $\alpha \in [N]^m$. As shown in the proof of Theorem \ref{thm: Synchronous NPA hierarchy}, we have $P_x \ket{\alpha} = \ket{x \alpha}$ for each $\alpha \in [N]^m$. Let $\mathfrak{A}$ denote the finite-dimensional C*-algebra generated by the operators $P_x$ in $B(H_m)$. The proof that $\tau(T) = \bra{0} T \ket{0}$ for $T \in \mathfrak{A}$ defines a faithful trace on $\mathfrak{A}$ is identical to the argument presented in the proof of Theorem \ref{thm: Synchronous NPA hierarchy}. From Equation \ref{eqn: rank loop span}, it follows that for each $\alpha \in [N]^{m+1}$, $P_{\alpha} \in \Span \{P_{\beta} : \beta \in [N]^m\}$. This is because if $\alpha \in [N]^{m+1}$ and 
\[ \ket{\alpha} = \sum_{\beta \in [N]^m} t_{\beta} \ket{\beta} \]
then 
\[ \tau(( P_{\alpha} - \sum t_{\beta} P_{\beta})^{\dagger} ( P_{\alpha} - \sum t_{\beta} P_{\beta})) = \bra{0} ( P_{\alpha} - \sum t_{\beta} P_{\beta})^{\dagger} ( P_{\alpha} - \sum t_{\beta} P_{\beta}) \ket{0} = 0 \]
since $P_{\alpha} \ket{0} = \ket{\alpha} = \sum t_{\beta} P_{\beta} \ket{0}$. Thus $\mathfrak{A} = \Span \{P_{\beta} : \beta \in [N]^m\}$. We conclude that $\{p(x,y)\} \in D_q(N)$. \end{proof}

\section{Applications}

In this section, we consider two applications, each involving $d^2$ projections which span the vector space $M_d$. We begin by outlining how to characterize families of projections of this form. Throughout this section, recall that a \textbf{factor} is a C*-algebra $\mathfrak{A}$ with trivial center $Z(\mathfrak{A})$, meaning that if $T \in \mathfrak{A}$ commutes with every other element of $\mathfrak{A}$, then $T = \lambda I$ for some scalar $\lambda$. If $\mathfrak{A}$ is a finite-dimensional factor, then $\mathfrak{A} \cong M_d$ for some $d \in \mathbb{N}$.

Suppose that $P_1, \dots, P_N$ are projections which span $M_d$ where $N \geq d^2$. Let $\tau = \frac{1}{d}\Tr(\cdot)$ denote the unique tracial state on $M_d$ and let $P_0$ denote the identity. Then the matrix $\Gamma^2$ with entries
\[ \Gamma_{ab,xy} = \tau((P_a P_b)^{\dagger} P_x P_y) \]
indexed by $a,b,x,y \in [N]^2$ must have rank at most $d^2$. If the submatrix $(\Gamma_{a,b})_{a,b \in [N]^1}$ has rank $d^2$, then $\Gamma^2$ will have a rank loop and hence satisfy the conditions of Theorem \ref{thm: rank loop}. Also, because $M_d$ has trivial center, we know that if $T \in M_d$ and $[T,P_a] = 0$ for all $a \in [N]$, it follows that $T$ is a scalar multiple of the identity. This property may potentially be reflected by linear relations on the entries of $\Gamma^2$ (we will demonstrate this for the two cases we examine below).

Conversely, suppose we are given $\Gamma^2$ with $\Rank(\Gamma^2) = \Rank(\Gamma^1) = d^2$. By Theorem \ref{thm: rank loop}, there exists a $d^2$-dimensional C*-algebra $\mathfrak{A}$, projections $P_1, \dots, P_N \in \mathfrak{A}$ spanning $\mathfrak{A}$, and a faithful tracial state $\tau: \mathfrak{A} \to \mathbb{C}$ such that
\[ \Gamma_{ab,xy} = \tau((P_a P_b)^{\dagger} P_x P_y) \]
for every $a,b,x,y \in [N]$. If $\Gamma^2$ satisfies sufficiently many linear constraints to guarantee that the center of $\mathfrak{A}$ is trivial, then $\mathfrak{A}$ is a $d^2$-dimensional factor and hence $\mathfrak{A} \cong M_d$. Since $M_d$ has a unique faithful tracial state, $\tau = \frac{1}{d}\Tr(\cdot)$.

In the following, we consider two situations in which the matrix algebra $M_d$ may arise as a linear span of rank-one projections. In each situation, we will derive necessary and sufficient conditions on an associated certificate $\Gamma^2$ that guarantee the projections producing the certificate generate a $d^2$-dimensional C*-algebra $\mathfrak{A}$ with trivial center, implying that $\mathfrak{A} \cong M_d$. 

\subsection{SIC-POVMs} \label{subsec: SIC-POVM}

Let $d \in \mathbb{N}$. Then a set $\{P_1, P_2, \dots, P_{d^2}\}$ of rank one projections in $M_d$ is called a SIC-POVM if $\Span \{P_1, P_2, \dots, P_{d^2}\} = M_d$, $\sum_{i=1}^{d^2} P_i = d I_d$, and $\Tr(P_i P_j)=c$ for all $i \neq j$, where $c$ is a fixed positive constant. Under these conditions, it can be shown that \[ \Tr(P_i P_j) = \begin{cases} \frac{1}{d+1} & i \neq j \\ 1 & i = j \end{cases}. \]
It has been verified that SIC-POVMs exist in most dimensions $d \leq 50$, and numerical evidence suggests that they also exist in most dimensions $d \leq 150$. It is currently an open question whether or not SIC-POVMs exist in every dimension $d$, or if there is an upper bound on the dimension $d$ in which SIC-POVMs exist. See \cite{SICQuestion} for an overview of the history and open problems related to SIC-POVMs.

Define \[ p_{sic}^d(x,y) = \begin{cases} \frac{1}{d(d+1)} & x \neq y \\ \frac{1}{d} & x = y \end{cases}. \] We first verify that $p_{sic}^d(x,y)$ extends to a positive semidefinite certificate $\Gamma^1$ satisfying $\Rank(\Gamma^1) = d^2$. The certificate $\Gamma^1$ is uniquely defined and equals 
\begin{equation}
    \Gamma^1 = \begin{bmatrix} 1 & \frac{1}{d} & \frac{1}{d} & \dots & \frac{1}{d} \\
    \frac{1}{d} & \frac{1}{d} & \frac{1}{d(d+1)} & \dots & \frac{1}{d(d+1)} \\
    \frac{1}{d} & \frac{1}{d(d+1)} & \frac{1}{d} & & \frac{1}{d(d+1)} \\
    \vdots & \vdots & & \ddots & \vdots \\
    \frac{1}{d} & \frac{1}{d(d+1)} & \dots & & \frac{1}{d} \end{bmatrix}. \nonumber
\end{equation}
This matrix can be factored as
\[ \Gamma^1 = v v^T + \frac{1}{d+1} \begin{bmatrix} 0 & \vec{0}^T \\ \vec{0} & I \end{bmatrix}  - \frac{1}{d^2(d+1)} \begin{bmatrix} 0 & \vec{0}^T \\ \vec{0} & J \end{bmatrix} \]
where $v = \begin{bmatrix} 1 & \frac{1}{d} & \dots & \frac{1}{d} \end{bmatrix}^T \in M_{d^2+1,1}$, $\vec{0}$ denotes the zero matrix in $M_{d^2,1}$, $I$ denotes the $d^2 \times d^2$ identity matrix and $J$ denotes the $d^2 \times d^2$ matrix for which every entry is 1. Since the spectrum of $J$ is $\{0,d^2\}$, the spectrum of $\frac{1}{d+1} I - \frac{1}{d^2(d+1)} J$ is $\{\frac{1}{d+1}, 0\}$. It follows that $\Gamma^1$ is positive semidefinite. To see that $\Rank(\Gamma^1) = d^2$, notice that
\[ \frac{1}{d+1} I - \frac{1}{d^2(d+1)} J = \frac{1}{d+1} (I - \frac{1}{d^2} J) \]
and that $I - \frac{1}{d^2}J$ is a rank $d^2 - 1$ projection. Moreover,
\[ \frac{1}{d+1} \left( \begin{bmatrix} 0 & \vec{0}^T \\ \vec{0} & I \end{bmatrix}  - \frac{1}{d^2} \begin{bmatrix} 0 & \vec{0}^T \\ \vec{0} & J \end{bmatrix} \right) v = 0. \]
It follows that the rank of $\Gamma^1$ is $d^2$, since $v$, together with the $d^2-1$ eigenvectors for $\Gamma^1 - vv^T$, constitute a mutually orthogonal family of eigenvectors for $\Gamma^1$.

We now wish to consider certificates $\Gamma^2$ extending $\Gamma^1$ with rank $d^2$. We would like such a certificate to satisfy linear relations that guarantee the underlying C*-algebra generating $\Gamma^2$ is the matrix algebra $M_d$. The following theorem will allow us to find such relations.

\begin{theorem} \label{thm: SIC algebra}
Suppose $\mathfrak{A}$ is a C*-algebra satisfying the following conditions:
\begin{enumerate}
    \item $\mathfrak{A} = \text{span}\{P_1, \dots, P_{d^2}\}$ where each $P_i$ is a non-zero projection.
    \item $\sum P_i = d I$ where $I$ is the identity of $\mathfrak{A}$.
    \item For each $P,Q \in \{P_1, \dots, P_{d^2}\}$ with $P \neq Q$, we have $PQP = \frac{1}{d+1} P$.
\end{enumerate}
Then $\mathfrak{A} \cong M_d$ and $\{P_1,\dots, P_{d^2}\}$ is a SIC-POVM.
\end{theorem}

\begin{proof}
We will show that the center $Z(\mathfrak{A})$ of $\mathfrak{A}$ is the scalar multiples of the identity $I \in \mathfrak{A}$ and that $\dim(\mathfrak{A}) = d^2$. This will imply that $\mathfrak{A} \cong M_d$.

We begin by showing that $\{P_1, \dots, P_{d^2}\}$ is a linearly independent set. To this end, suppose that $\sum a_i P_i = 0$ for some scalars $a_1, \dots, a_{d^2} \in \mathbb{C}$. Conjugating $\sum a_i P_i$ by $P_j$ for some $j \in [d^2]$, we get
\[ \left( \sum_{i \neq j} \frac{a_i}{d+1} + a_j \right) P_j = 0. \]
Since $P_j \neq 0$, we see that $\sum_{i \neq j} \frac{a_i}{d+1} + a_j = 0$. Since this holds for every $j \in [d^2]$, it follows that 
\[ -a_j = \frac{1}{d+1} \sum_{i \neq j} a_i = \frac{1}{d+1} \left( \sum_{i=1}^{d^2} a_i - a_j \right) \]
and thus $a_j = \frac{1}{d} \sum_{i=1}^{d^2} a_i =: C$. So $a_j = C$ is constant. Since $0 = \sum a_i P_i = C (\sum P_i)$ and since $\sum P_i = d I$, we have $C = 0$. Therefore $\{ P_1, P_2, \dots, P_{d^2}\}$ is linearly independent.

Now suppose $T \in Z(\mathfrak{A})$. Then $[T,P_k] = 0$ for each $k$. Since $\mathfrak{A} = \text{span}\{P_i\}$, $T = \sum \alpha_i P_i$ for some scalars $\{\alpha_i\}$. For each $k \in [d^2]$,
\begin{eqnarray}
TP_k & = & P_kTP_k \nonumber \\
& = & \sum_{i=1}^{d^2} \alpha_i P_k P_i P_k \nonumber \\
& = & ( \sum_{i \neq k} \frac{\alpha_i}{d+1} + \alpha_k) P_k. \nonumber
\end{eqnarray}
Let $\lambda_k = ( \sum_{i \neq k} \frac{\alpha_i}{d+1} + \alpha_k)$ for each $k$, so that $TP_k = \lambda_k P_k$. Then
\[ T = TI = \sum_{k=1}^{d^2} \frac{1}{d} TP_k = \sum_{k=1}^{d^2} \lambda_k P_k. \]
Since $\{P_1, \dots, P_{d^2} \}$ is linearly independent, we see that $\alpha_k = \lambda_k$ for every $k \in [d^2]$. It follows that for every $k \in [d^2]$, 
\[ \left( \sum_{i=1}^{d^2} \alpha_i \right) - \alpha_k = \sum_{i \neq k} \alpha_i = \lambda_k - \alpha_k = 0. \]
So $\alpha_k = \sum_{i=1}^{d^2} \alpha_i$ for every $k \in [d^2]$. Hence
\[ T = \sum_{k=1}^{d^2} \alpha_k P_k = \left( \sum_{i=1}^{d^2} \alpha_i \right) \sum_{k=1}^{d^2} P_k = d \left( \sum_{i=1}^{d^2} \alpha_i \right) I. \]
So $T$ is a scalar multiple of $I$. It follows that $\mathfrak{A}$ is a factor. Since $\dim(\mathfrak{A}) = d^2$, $\mathfrak{A}=M_d$.

Now consider the value of $\Tr(PQ)$ for $P,Q \in \{P_1, \dots, P_{d^2}\}$. Since each $P_i$ is non-zero and since $\sum P_i = dI$, we have $\sum \Tr(P_i) = d^2$. Since $\Tr(P_i) \geq 1$ for any non-zero projection $P_i \in M_d$, we must have $\Tr(P_i) = 1$ for every $i=1,2,\dots,d^2$. So each $P_i$ is a rank one projection in $M_d$. Finally, if $P,Q \in \{P_1, \dots, P_{d^2}\}$ and $P \neq Q$, then
\[ \Tr(PQ) = \Tr(PQP) = \frac{1}{d+1} \Tr(P) = \frac{1}{d+1}. \]
We conclude that $\{P_1, \dots, P_{d^2}\}$ is a SIC-POVM. \end{proof}

We now outline how to use Theorem \ref{thm: SIC algebra} to define linear relations on a certificate $\Gamma^2$. Suppose we are given a SIC-POVM $\{P_1,\dots,P_{d^2}\}$, and consider the matrix 
\[ \Gamma_{ab,xy} = \frac{1}{d} \Tr((P_a P_b)^{\dagger} P_x P_y) \]
with $a,b,x,y \in [d^2]$. If $b = x$ and $a \neq b$, then
\begin{equation} \label{eqn: SIC relation}
\Gamma_{ab,xy} = \frac{1}{d} \Tr(P_b P_a P_b P_y) = \frac{1}{d(d+1)} \Tr(P_b P_y) = \frac{1}{d+1} \Gamma_{b,y} 
\end{equation}
for all $y \in [d^2]$. Since $M_d$ is spanned by $\{P_1, \dots, P_{d^2}\}$, we conclude that
\[ P_b P_a P_b = \frac{1}{d+1} P_b \]
whenever $a \neq b$. The next theorem says that a certificate $\Gamma^2$ satisfying Equation \ref{eqn: SIC relation} always arises from a SIC-POVM.

\begin{theorem}
Let $d \in \mathbb{N}$. Suppose that there exists a positive semidefinite matrix $(\Gamma_{v,w})_{v,w \in [d^2]^2}$ satisfying $\Gamma_{0,0}=1$, $\Gamma_{a,b} = p_{sic}^d(a,b)$ for all $a,b \in [d^2]$ and
\begin{enumerate}
    \item $\Gamma_{v,w} = \Gamma_{v',w'}$ whenever $v^{\dagger}w \sim (v')^{\dagger}w'$
    \item $\Rank(\Gamma)=d^2$
    \item $\Gamma_{ab,by} =  \frac{1}{d+1} \Gamma_{b,y}$ for all $a \neq b$ and every $y \in [d^2]$.
\end{enumerate}
Then there exists a SIC-POVM $\{P_1, \dots, P_{d^2}\} \subseteq M_d$ such that
\[ \Gamma_{ab,xy} = \frac{1}{d} \Tr((P_a P_b)^{\dagger} P_x P_y) \]
for all $a,b,x,y \in [d^2]$.
\end{theorem}

\begin{proof}
By Theorem \ref{thm: rank loop}, there exists a finite dimensional C*-algebra $\mathfrak{A}$, projections $P_1,\dots,P_{d^2} \in \mathfrak{A}$ which span $\mathfrak{A}$, and a faithful tracial state $\tau: \mathfrak{A} \to \mathbb{C}$ such that
\[ \Gamma_{ab,xy} = \tau((P_a P_b)^{\dagger} P_x P_y) \]
for all $a,b,x,y \in [d^2]$. Since $\Rank(\Gamma)=d^2$ and $\mathfrak{A}$ is spanned by $P_1, \dots, P_{d^2}$, the vectors $P_1, \dots, P_{d^2}$ must be linearly independent (hence non-zero). Since $\tau$ is faithful, $\mathfrak{A}$ may be regarded as a Hilbert space with inner product $\langle a,b \rangle := \tau(a^{\dagger}b)$ for all $a,b \in \mathfrak{A}$. Because $\mathfrak{A} = \Span \{P_1, \dots, P_{d^2}\}$, the only vector $x \in \mathfrak{A}$ satisfying $\langle x, P_i \rangle = 0$ for all $i \in [d^2]$ is $x=0$.

Now suppose $a,b \in [d^2]$ and $a \neq b$. Then for any $y \in [d^2]$,
\[ \langle P_b P_a P_b, P_d \rangle = \tau((P_a P_b)^{\dagger} P_b P_y) = \Gamma_{ab,by} = \frac{1}{d+1} \Gamma_{b,y} = \langle \frac{1}{d+1} P_b, P_y \rangle. \]
It follows that
\[ \langle P_b P_a P_b - \frac{1}{d+1} P_b, P_y \rangle = 0 \]
for all $y \in [d^2]$ and hence $P_b P_a P_b = \frac{1}{d(d+1)} P_b$. Therefore $\mathfrak{A}$ satisfies the conditions of Theorem \ref{thm: SIC algebra} and hence $\mathfrak{A} = M_d$ and $\{P_1, \dots, P_{d^2}\}$ is a SIC-POVM in $M_d$. The statement follows since $\frac{1}{d} \Tr(\cdot)$ is the unique faithful tracial state on $M_d$.
\end{proof}
\color{black}

\subsection{MUBs} \label{subsec: MUBs}

Let $H$ be a Hilbert space of dimension $d \in \mathbb{N}$. Two sets $\{ \ket{x_1} \ket{x_2}, \dots, \ket{x_d} \}$ and $\{ \ket{y_1}, \ket{y_2}, \dots, \ket{y_d} \}$ in $H$ are \textbf{mutually unbiased bases} if they are each orthonormal bases for $H$ and $|\braket{x_i}{y_j}| = \frac{1}{\sqrt{d}}$ for all $i,j \in [d]$. Letting $P_i = \ket{x_i}\bra{x_i}$ and $Q_j = \ket{y_j}\bra{y_j}$ for each $i,j \in [d]$ we obtain projection-valued measures $\{P_i\}_{i=1}^d$ and $\{Q_j\}_{j=1}^d$ which satisfy $\Tr(P_i Q_j) = \frac{1}{d}$ for all $i,j \in [d]$.

It is known that a Hilbert space of dimension $d$ can have at most $d+1$ mutually unbiased bases, or MUBs. When $d = p^n$ for some prime $p$ and some positive integer $n$, then it is also known that $d+1$ mutually unbiased bases exist. When $d$ is a composite number, it is not known if $d+1$ mutually unbiased bases exist. In particular, it is unknown whether or not there exist seven mutually unbiased bases for the Hilbert space of dimension 6, though numerical evidence suggests that no more than three MUBs exist in this Hilbert space \cite{MUBSix}.

Let $d \in \mathbb{N}$. Define 
\[ p_{mub}^d((x,i),(y,j)) = \begin{cases} \frac{1}{d} & (x,i) = (y,j) \\ 0 & x=y \text{ and } i \neq j \\ \frac{1}{d^2} & x \neq y \end{cases} \] 
for all $(x,i),(y,j) \in [d+1] \times [d]$. We now verify that $p_{mub}^d(x,y)$ extends to a positive semidefinite certificate $\Gamma^1$ satisfying $\Rank(\Gamma^1) = d^2$. The certificate $\Gamma^1$ is uniquely defined and equals 
\begin{equation}
    \Gamma^1 = 
    \begin{bmatrix} 
    1 & \vec{v}^{\dagger} & \dots & \vec{v}^{\dagger} \\
    \vec{v} & A & & B \\
    \vdots & & \ddots & \\
    \vec{v} & B & & A 
    \end{bmatrix} \in M_{d^2+d+1} \nonumber 
\end{equation}
where 
\begin{equation}
    \vec{v} = \begin{bmatrix} \frac{1}{d} \\ \vdots \\ \frac{1}{d} \end{bmatrix} \in \mathbb{M}_{d,1}, \quad
    A =
    \begin{bmatrix}
    \frac{1}{d} & & 0 \\
    & \ddots & \\
    0 & & \frac{1}{d}
    \end{bmatrix} \in \mathbb{M}_d, \text{ and }
    B = 
    \begin{bmatrix}
    \ddots & & \iddots \\
    & \frac{1}{d^2} & \\
    \iddots & & \ddots 
    \end{bmatrix} \in \mathbb{M}_d. \nonumber
\end{equation}
Here, we have written $\Gamma^1$ with respect to the enumeration 
\[ \{0, (1,1), (1,2), \dots, (1,d), (2,1), \dots, (2,d), \dots, (d+1,1), \dots, (d+1,d) \} \]
of the set of indices $\{(x,a): x \in [d+1], a \in [d]\} \cup \{0\}$ and regarding $\{P_{x,a}\}_{a=1}^d$ as a projection-valued measure for each $x \in [d+1]$. Now $\Gamma^1$ factors as
\[ \Gamma^1 = w w^T + \frac{1}{d} \begin{bmatrix} 0 & 0 & \dots & & 0 \\ 0 & A' & 0 & & \\ 0 & 0 & A' & & \\ \vdots & & & \ddots & \\ 0 & & &  & A' \end{bmatrix} \quad \text{with} \quad A' = \frac{1}{d}(I - \frac{1}{d} J) \]
where $w = \begin{bmatrix} 1 & \vec{v}^T & \dots & \vec{v}^T \end{bmatrix}^T \in \mathbb{C}^{d^2+d+1}$, $I$ denotes the $d \times d$ identity matrix and $J$ denotes the $d \times d$ matrix with every entry equal to 1. Since $I - \frac{1}{d} J$ is a rank $d - 1$ projection, $\Gamma^1$ is the sum of a rank one projection and a rank $(d+1)(d-1) = d^2-1$ projection. Hence $\Gamma^1$ is positive semidefinite. Since $A' \vec{v} = \vec{0}$, we see that $w w^T$ is orthogonal to the matrix
\[ \frac{1}{d} \begin{bmatrix} 0 & 0 & \dots & & 0 \\ 0 & A' & 0 & & \\ 0 & 0 & A' & & \\ \vdots & & & \ddots & \\ 0 & & &  & A' \end{bmatrix} \]
and hence $\Gamma^1$ is rank $d^2$.

We now wish to consider certificates $\Gamma^2$ extending $\Gamma^1$ with rank $d^2$. As in the previous subsection, we would like such a certificate to satisfy linear relations that guarantee the underlying C*-algebra generating $\Gamma^2$ is the matrix algebra $M_d$. The following theorem will allow us to find such relations.

\begin{theorem} \label{thm: MUB Algebra}
Suppose $\mathfrak{A}$ is a C*-algebra satisfying the following conditions:
\begin{enumerate}
    \item $\mathfrak{A} = \text{span}\{P_{x,a} : x \in [d+1]; a \in [d] \}$ where each $P_{x,a}$ is non-zero.
    \item $\sum_{a=1}^d P_{x,a} = I$ for each $x$, where $I$ is the identity of $\mathfrak{A}$.
    \item For each $x \neq y$, we have $P_{x,a} P_{y,b} P_{x,a} = \frac{1}{d} P_{x,a}$.
\end{enumerate}
Then $\mathfrak{A} \cong M_d$ and the projection-valued measures $\{P_{x,a}\}_{a=1}^d$ correspond to mutually unbiased bases.
\end{theorem}

\begin{proof}
We proceed as in the proof of Theorem \ref{thm: SIC algebra}, although a few details will be more tedious. We first show that $\mathfrak{A}$ has dimension $d^2$. To do this, let
\[ \mathcal{B} := \{P_{x,a} : x \in [d+1], a \in [d-1] \} \cup \{ I \}. \]
Since $\sum_{a=1}^k P_{x,a} = I$ for each $x \in [d+1]$, $\mathcal{B}$ spans $\mathfrak{A}$. We will show that $\mathcal{B}$ is a linearly independent set. To do this, suppose that
\begin{equation} \label{eqn: MUB lin ind} 
\sum_{x=1}^{d+1} \sum_{a=1}^{d-1} b_{x,a} P_{x,a} + b_0 I = 0. 
\end{equation}
For each $x \in [d+1]$ and each $a \in [d-1]$, conjugating expression \ref{eqn: MUB lin ind} by $P_{x,a}$ yields
\[ \left( \sum_{y \neq x} \sum_{c=1}^{d-1} \frac{b_{y,c}}{d} + b_{x,a}  + b_0 \right) P_{x,a} = 0. \]
Since $P_{x,a} \neq 0$, 
\begin{equation} \label{eqn: MUB 2}
\left( \sum_{y \neq x} \sum_{c=1}^{d-1} \frac{b_{y,c}}{d} \right) + b_{x,a} + b_0 = 0 
\end{equation}
for all $x \in [d+1]$ and $a \in [d-1]$. Also, conjugating expression \ref{eqn: MUB lin ind} by $P_{x,d}$ with $x \in [d+1]$ yields
\[ \left( \sum_{y \neq x} \sum_{c=1}^{d-1} \frac{b_{y,c}}{d} + b_0 \right) P_{x,a} = 0 \]
and hence
\begin{equation} \label{eqn: MUB 3}
\sum_{y \neq x} \sum_{c=1}^{d-1} \frac{b_{y,c}}{d} + b_0 = 0 
\end{equation}
for all $x \in [d+1]$. Now Equation \ref{eqn: MUB 2} together with Equation \ref{eqn: MUB 3} imply that $b_{x,a} = 0$ for every $x \in [d+1]$ and $a \in [d-1]$. This, in turn, implies that $b_0 = 0$ by Equation \ref{eqn: MUB 3}. We conclude that $\mathcal{B}$ is linearly independent. So $\dim(\mathfrak{A}) = d^2$.

We now show that $Z(\mathfrak{A})$ consists of only scalar multiplies of $I$. Suppose that $T \in Z(\mathfrak{A})$, and that
\[ T = \sum_{x=1}^{d+1} \sum_{a=1}^{d-1} \alpha_{x,a} P_{x,a} + \alpha_0 I. \]
For each $x \in [d+1]$ and $a \in [d-1]$, we have
\[ P_{x,a} T P_{x,a} = \left( \sum_{y \neq x} \sum_{c=1}^{d-1} \frac{\alpha_{y,c}}{d} + \alpha_{x,a} + \alpha_0 \right) P_{x,a} =: \lambda_{x,a} P_{x,a} \]
and, for each $x \in [d+1]$,
\[ P_{x,d} T P_{x,d} = \left( \sum_{y \neq x} \sum_{c=1}^{d-1} \frac{\alpha_{y,c}}{d} + a_0 \right) P_{x,d} =: \lambda_{x,d} P_{x,d}. \]
Now fix $x \in [d+1]$. Since $I = \sum_{a=1}^d P_{x,a}$, and since $TP_{x,a} = P_{x,a} T = P_{x,a} T P_{x,a}$ for each $a \in [d]$, we have
\[ T = TI = \sum_{a=1}^d TP_{x,a} = \sum_{a=1}^d P_{x,a} T P_{x,a} = \sum_{a=1}^d \lambda_{x,a} P_{x,a}. \]
It follows that $T \in \Span \{P_{x,1}, \dots, P_{x,d-1}, I\}$. Since this is true for every $x \in [d+1]$, and since $\mathcal{B}$ is linearly independent, we must conclude that $T = \alpha_0 I$. Therefore $\mathfrak{A}$ is a factor. Since $\dim(\mathfrak{A})=d^2$, $\mathfrak{A} \cong M_d$.

Finally, let $x,y \in [d+1]$ with $x \neq y$ and let $a,b \in [d]$. Then
\[ \Tr(P_{x,a} P_{y,b}) = \Tr(P_{x,a} P_{y,b} P_{x,a}) = \frac{1}{d} \Tr(P_{x,a}). \]
Also, since $\sum_{c=1}^d P_{x,c} = I$, we have
\[ d = \Tr(I) = \sum_{c=1}^d \Tr(P_{x,c}). \]
Since each $P_{x,c}$ is non-zero and since $\Tr(P_{x,c})$ is an integer, we conclude that $\Tr(P_{x,a}) = 1$. It follows that the set of projection-valued measures $\{P_{x,a}\}$ corresponds to family of $d+1$ mutually unbiased bases.
\end{proof}

We are now prepared to state the conditions on a certificate $\Gamma^2$ which would imply the existence of $d+1$ mutually unbiased bases in $\mathbb{C}^d$. To do so, we will need to describe a matrix $\Gamma$ indexed by words in the letters $\{(x,a) : x \in [d+1], a \in [d]\}$. To simplify notation, let $A_{x,d}$ denote the set of symbols $\{(x,1),\dots,(x,d)\}$ and let $A_d = \cup_{x=1}^{d+1} A_{x,d}$.

\begin{theorem}
Let $d \in \mathbb{N}$. Suppose that there exists a positive semidefinite matrix $\Gamma^2 = (\Gamma_{v,w})$ indexed by words in $A_d^2$ satisfying $\Gamma_{0,0}=1$, $\Gamma_{a,b} = p_{mub}^d(a,b)$ for all $a,b \in A_d$ and
\begin{enumerate}
    \item $\Gamma_{v,w} = \Gamma_{v',w'}$ whenever $v^{\dagger}w \sim (v')^{\dagger}w'$
    \item $\Rank(\Gamma^2)=d^2$ 
    \item whenever $a \in A_{x,d}$, $b \in A_{y,d}$ with $x \neq y$, and $c \in A_d$, we have $\Gamma_{ab,bc} =  \frac{1}{d} \Gamma_{b,c}$.
\end{enumerate}
Then there exists, for each $x \in [d+1]$, a projection-valued measure $\{P_a \}_{a \in A_{x,d}} \subseteq M_d$, and 
\[ \Gamma_{ab,a'b'} = \frac{1}{d} \Tr((P_a P_b)^{\dagger} P_{a'} P_{b'}) \]
for all $a,b,a',b' \in A_d$. In particular, there exist $d+1$ mutually unbiased bases in $\mathbb{C}^d$.
\end{theorem}

\begin{proof}
By Theorem \ref{thm: rank loop}, there exists a finite dimensional C*-algebra $\mathfrak{A}$, projections $\{ P_{a} : a \in A_d \} \in \mathfrak{A}$ which span $\mathfrak{A}$, and a faithful tracial state $\tau: \mathfrak{A} \to \mathbb{C}$ such that
\[ \Gamma_{ab,xy} = \tau((P_a P_b)^{\dagger} P_x P_y) \]
for all $a,b,x,y \in A_d$. Since $\Rank(\Gamma^2)=d^2$ and $\Gamma_{0,a} = \tau(P_a) = \frac{1}{d}$ for each $a \in A_d$, each vector $P_a$ must be non-zero. Since $\tau$ is faithful, $\mathfrak{A}$ may be regarded as a Hilbert space with inner product $\langle a,b \rangle := \tau(a^{\dagger}b)$ for all $a,b \in \mathfrak{A}$. Because $\mathfrak{A} = \Span \{ P_a : a \in A_d \}$, the only vector $x \in \mathfrak{A}$ satisfying $\langle x, P_a \rangle = 0$ for all $a \in A_d$ is $x=0$.

Now suppose $a \in A_x$ and $b \in A_y$ and $x \neq y$. Then for any $c \in A_d$,
\[ \langle P_b P_a P_b, P_c \rangle = \tau((P_a P_b)^{\dagger} P_b P_c) = \Gamma_{ab,bc} = \frac{1}{d} \Gamma_{b,c} = \langle \frac{1}{d} P_b, P_c \rangle. \]
It follows that
\[ \langle P_b P_a P_b - \frac{1}{d} P_b, P_y \rangle = 0 \]
for all $y \in [d^2]$ and hence $P_b P_a P_b = \frac{1}{d} P_b$. Therefore $\mathfrak{A}$ satisfies the conditions of Theorem \ref{thm: MUB Algebra} and hence $\mathfrak{A} \cong M_d$, $\tau = \frac{1}{d} \Tr$, and for every $x \in [d+1]$, $\{P_a : A_x\}$ is a projection valued measure consisting of rank one projections. It follows that the families $\{P_a : A_x\}$ for $x \in [d+1]$ correspond to $d+1$ mutually unbiased bases in $M_d$.
\end{proof}

\bibliographystyle{plain}
\bibliography{references}

\begin{thebibliography}{10}

\bibitem{KlepBurgdorfTracial}
S.~Burgdorf and I.~Klep.
\newblock The truncated tracial moment problem.
\newblock {\em Journal of Operator Theory}, 68:141--163, 2012.

\bibitem{CHSH}
John~F. Clauser, Michael~A. Horne, Abner Shimony, and Richard~A. Holt.
\newblock Proposed experiment to test local hidden-variable theories.
\newblock {\em Phys. Rev. Lett.}, 23:880--884, Oct 1969.

\bibitem{ConnesConjecture}
A.~Connes.
\newblock Classification of injective factors cases {II}$_1$, {II}$_\infty$,
  {III}$_\lambda$, $\lambda \neq 1$.
\newblock {\em Annals of Mathematics}, 104(1):73--115, 1976.

\bibitem{ConwayOperatorTheoryBook}
John~B. Conway.
\newblock {\em A course in operator theory}, volume~21 of {\em Graduate Studies
  in Mathematics}.
\newblock American Mathematical Society, Providence, RI, 2000.

\bibitem{FritzKirchberg}
T.~Fritz.
\newblock Tsirelson's problem and {K}irchberg's conjecture.
\newblock {\em Reviews in Mathematical Physics}, 24(05):1250012, 2012.

\bibitem{Fu_Miller_Slofstra_2021_preprint}
Honghao Fu, Carl~A. Miller, and William Slofstra.
\newblock The membership problem for constant-sized quantum correlations is
  undecidable.
\newblock {\em arXiv:2101.11087}, 2021.

\bibitem{SICQuestion}
Christopher~A. Fuchs, Michael~C. Hoang, and Blake~C. Stacey.
\newblock The {SIC} question: History and state of play.
\newblock {\em Axioms}, 6(3), 2017.

\bibitem{GriblingPolakMUBs}
Sander Gribling and Sven Polak.
\newblock Mutually unbiased bases: polynomial optimization and symmetry.
\newblock {\em arXiv:2111.05698}, 2021.

\bibitem{HaagKastler}
Rudolf Haag and Daniel Kastler.
\newblock An algebraic approach to quantum field theory.
\newblock {\em J. Math. Phys.}, 5(7):848--861, 1964.

\bibitem{HarrisThesis}
{Harris, Samuel}.
\newblock Unitary correlation sets and their applications, 2019.

\bibitem{mipStarEqualRe}
Zhengfeng Ji, Anand Natarajan, Thomas Vidick, John Wright, and Henry Yuen.
\newblock {MIP}*= {RE}.
\newblock {\em arXiv:2001.04383}, 2020.

\bibitem{JMPPSW2011}
M.~Junge, M.~Navascues, C.~Palazuelos, D.~Perez-Garcia, V.~Scholz, and
  R.~Werner.
\newblock Connes' embedding problem and {T}sirelson's problem.
\newblock {\em Journal of Mathematical Physics}, 52(1):012102, 2011.

\bibitem{KlepPohvTracialMoment}
I.~Klep and Janez Povh.
\newblock Constrained trace-optimization of polynomials in freely noncommuting
  variables.
\newblock {\em Journal of Global Optimization}, 64:325--348, 2016.

\bibitem{RordamMusatNonClosure}
M.~Musat and M.~Rordam.
\newblock Non-closure of quantum correlation matrices and factorizable channels
  that require infinite dimensional ancilla (with an appendix by narutaka
  ozawa).
\newblock {\em Communications in Mathematical Physics}, 375:1761–1776, 2020.

\bibitem{NPA2008}
Miguel Navascu{\'{e}}s, Stefano Pironio, and Antonio Ac{\'{\i}}n.
\newblock A convergent hierarchy of semidefinite programs characterizing the
  set of quantum correlations.
\newblock {\em New Journal of Physics}, 10(7):073013, July 2008.

\bibitem{Navascues2012}
Miguel Navascu{\'e}s, Stefano Pironio, and Antonio Ac{\'i}n.
\newblock {\em SDP Relaxations for Non-Commutative Polynomial Optimization},
  pages 601--634.
\newblock Springer US, Boston, MA, 2012.

\bibitem{OzawaConnes}
N.~Ozawa.
\newblock About the {C}onnes embedding conjecture: algebraic approaches.
\newblock {\em Jpn. J. Math.}, 8(1):147--183, 2013.

\bibitem{PaulsenEtAlSynchronous}
V.~I. Paulsen, S.~Severini, D.~Stahlke, I.~G. Todorov, and A.~Winter.
\newblock Estimating quantum chromatic numbers.
\newblock {\em J. Funct. Anal.}, 270(6):2188--2222, 2016.

\bibitem{paulsen2002completely}
Vern~I Paulsen.
\newblock {\em Completely bounded maps and operator algebras}, volume~78.
\newblock Cambridge University Press, 2002.

\bibitem{MUBSix}
Philippe Raynal, Xin L\"u, and Berthold-Georg Englert.
\newblock Mutually unbiased bases in six dimensions: The four most distant
  bases.
\newblock {\em Phys. Rev. A}, 83:062303, Jun 2011.

\bibitem{RussellTwoOutcome20}
Travis~B. Russell.
\newblock Two-outcome synchronous correlations and {C}onnes' embedding problem.
\newblock {\em Quantum Information and Computation}, 20(5\&6):361--374, 2020.

\bibitem{Sikora2017LinearCF}
Jamie Sikora and Antonios Varvitsiotis.
\newblock Linear conic formulations for two-party correlations and values of
  nonlocal games.
\newblock {\em Mathematical Programming}, 162:431--463, 2017.

\bibitem{Tsirelson1987}
B.~S. Tsirelson.
\newblock Quantum generalizations of {B}ell's inequality.
\newblock {\em Letters in Mathematical Physics}, 4(2):93--100, Mar 1980.

\bibitem{Umesh_Vidick_SecurityQKD}
Umesh Vazirani and Thomas Vidick.
\newblock Fully device-independent quantum key distribution.
\newblock {\em Phys. Rev. Lett.}, 113:140501, Sep 2014.

\bibitem{npjPrepareMeasureCorrelations}
Wang Y., I.~Primaatmaja, E.~Lavie, A.~Varvitsiotis, and C.~Lim.
\newblock Characterising the correlations of prepare-and-measure quantum
  networks.
\newblock {\em npj Quantum Information}, 5(17), 2019.

\end{thebibliography}

\end{document}